\documentclass[11pt]{article}
\usepackage[all=normal,paragraphs,wordspacing,tracking]{savetrees}
\setlength{\belowcaptionskip}{-10pt}
\usepackage[text={6.5in,9in},centering]{geometry}

\usepackage{booktabs}
\usepackage{thmtools}
\usepackage{thm-restate}
\usepackage{amsfonts}
\usepackage{amssymb}
\usepackage{amsmath}
\usepackage{tikz}

\usepackage{relsize}
\usepackage[capitalise]{cleveref}%should be last
%\documentclass[11pt]{article}
%\usepackage[all=normal,paragraphs,wordspacing,tracking]{savetrees}
%\setlength{\belowcaptionskip}{-10pt}
%%\usepackage{times}
%\usepackage[text={6.5in,9in},centering]{geometry}
%\usepackage[utf8]{inputenc}
%\usepackage[en-US]{datetime2}
%\usepackage[nodayofweek,level]{datetime}
%\newcommand{\mydate}{\formatdate{19}{09}{2016}}
%\documentclass[manuscript]{acmart}
%\usepackage{booktabs}
%\usepackage{thmtools}
%\usepackage{thm-restate}
%% Main Header
%%\date{\printdayoff\today}
%
%% Packages:
%\usepackage{enumerate}
%\usepackage{amsmath}
%%\mathchardef\mhyphen="2D % Define a "math hyphen"
%\usepackage{amssymb}
\usepackage{amsthm} %for proof environments
\usepackage{enumitem}
%\usepackage{dsfont}
%\usepackage{empheq}
%\usepackage{nameref}
%\usepackage{natbib}
%%\usepackage{ulem}
%%\PassOptionsToPackage{hyphens}{url}\usepackage{hyperref}
%\usepackage{relsize}
%\usepackage{thmtools}
%\usepackage{url}
%% Colors
%\usepackage{color}
%\usepackage{color,soul}
%%\usepackage[dvipsnames]{xcolor}
%%\colorlet{LightRubineRed}{RubineRed!70!}
%%\colorlet{MyRubineRed}{blue!65!}
%%blue!40!black!20!RubineRed!20!
%%\colorlet{Mycolor1}{green!10!orange!90!}
%%\colorlet{Mygrey}{black!55!}
%%\definecolor{Mycolor2}{HTML}{00F9DE}
%%%\usepackage{tcolorbox}
%%\usepackage[most]{tcolorbox}
%%\newtcbox{\mymath}[1][]{%
%%    nobeforeafter, math upper, tcbox raise base,
%%    enhanced, 
%%    %colframe=blue!30!black,
%%    colback=white!30, 
%%    boxrule=1pt,
%%    #1}
%%\usepackage{setspace}
%
%
%
%\makeatletter
%\def\munderbar#1{\underline{\sbox\tw@{$#1$}\dp\tw@\z@\box\tw@}}
%\makeatother
%%\newcommand{\pBel}{\munderbar{\mathsf{B}}}
\newcommand{\pBel}{{\boldsymbol{\beta}}}
% Under Line Helpers:
%\setul{0.4ex}{0.12ex}
%\setulcolor{blue}
%\setulcolor{black}

% Theorems:
%\theoremstyle{definition}
 \newtheorem{example}{Example}
 \newtheorem{theorem}{Theorem}[section]
\newtheorem{definition}[theorem]{Definition}
 
 \newtheorem{lemma}[theorem]{Lemma}

\usepackage{thmtools}
\usepackage{thm-restate}
\iffalse
\declaretheoremstyle[
    spaceabove=-6pt, 
    spacebelow=6pt, 
    headfont=\normalfont\bfseries, 
    bodyfont = \normalfont,
    postheadspace=1em, 
    %qed=$\blacksquare$, 
    %headpunct={:}
    ]{myThmstyle}
\fi
%\declaretheorem[name={Theorem}, style=myThmstyle, numberwithin=subsection]{thm}
%\declaretheorem[name={Lemma}, style=myThmstyle, numberwithin=subsection]{lem}

% Macros:

%\newcommand{\iInNEnv} {$i\in [n]\cup \{e\}$}
\newcommand{\iInNEnv} {$i\in \Proc\cup \{e\}$}

\newcommand{\li}{{\ell_i}}
\newcommand{\lip}{\ell'_i}
\newcommand{\Li}{{L_i}}
\newcommand{\lEnv}{\ell_e}

\newcommand{\rit}{r_i(t)}
\newcommand{\Proti}{P_i}
\newcommand{\Qli}{Q^\li}

\newcommand{\cond}{\varphi}
\newcommand{\act}{\alpha}
\newcommand{\KoP}{KoP}
\newcommand{\Act}{\mathsf{Act}}
\newcommand{\Acti}{\Act_i}

%\newcommand{\acts}{act^s}
%\newcommand{\oacts}{\overrightarrow{\acts}}

%PKoP commands
%\newcommand{\doesia}{does_i(a)}

\newcommand{\expectation}{\mathbb{E}}

\newcommand{\acti}{\mathsf{does}_i(\act)}
\newcommand{\newacti}{\mathsf{does}_i(\act)}

\newcommand{\sat}{\models}

%local_i

\newcommand{\red}[1]{\textcolor{red}{#1}}

\newcommand{\given}{\,\boldsymbol{\vert}\,}
\newcommand{\Bigiven}{\,\boldsymbol{\big{|}}}
%oldsymbol{\vert}\,}

\newcommand{\go}{\mathtt{go}}

\newcommand{\fireA}{\textsf{fire}_A}
\newcommand{\fireB}{\textsf{fire}_B}

\newcommand{\FS}{\mathsf{FS}}%\newcommand{\FSnext}{\FS'}
\newcommand{\Proc}{\textsf{Ags}}%{\textsf{Proc}}
\newcommand{\timei}{\mathtt{time}_i}
\newcommand{\rootv}{\lambda}%{v_0}%{\hat{v}}
\newcommand{\RT}{{R_T}}
\newcommand{\XT}{{{\mathcal{X}}_T}}
\newcommand{\Pts}{\mathtt{Pts}}
\newcommand{\ract}{\underline{\act}}
\newcommand{\racti}{\underline{\act}} %{\underline{\act}_i}

\newcommand{\rli}{\underline{\ell}_i}

\newcommand{\muT}{\mu_{T}}
\newcommand{\defemph}[1]{\textbf{\textit{#1}}}
\newcommand{\eqdef}{\triangleq}
\newcommand{\Lacti}{L_{\act}}
\newcommand{\gvn}{\;\!|\;\!}
\newcommand{\Ract}{R_{\racti}}  
\newcommand{\vareps}{\varepsilon}
\newcommand{\Pro}[1]{\muT\big({\cdot\!\cdot\!\cdot#1}\,1-\delta\Bigiven{\cdot}\big)}
\newcommand{\bit}{\mathsf{bit}}
\newcommand{\marg}[1]{\red{$\boldsymbol{\odot}$\marginpar{\red{#1}}}}

\newcommand{\Spec}{{\sf Spec}}
\newcommand{\both}{\varphi_{\scriptscriptstyle{both}}}
\newcommand{\YES}{\mbox{{\it `}$\,\mathtt{Yes}\!${\it '}}}
\newcommand{\NO}{\mbox{{\it `}$\,\mathtt{No}\!${\it '}}}
%\newcommand{\NO}{\mathtt{No}}
%Choose Date:
%\date{$16^{\text{th}}$ May, 2019}

\title{Probably Approximately Knowing}

\author{
  Nitzan Zamir\\
  %Computer Science Department, Technion\\
  \texttt{snussa@cs.technion.ac.il}
  \and
  Yoram Moses\footnote{This work was funded in part by ISF grant 2061/19.
Yoram Moses is the Israel Pollak academic chair at the Technion. This is a full version of a paper whose extended abstract appears in the proceeding of PODC~2020.}\\
  %Viterbi Faculty of Electrical Engineering, Technion\\
  \texttt{moses@ee.technion.ac.il}
}

\iffalse
%\author{Anonymous Authors}

\author{Yoram Moses}
 \orcid{...}
 \affiliation{Viterbi Faculty of Electrical Engineering, Technion}
 %\affiliation{...}
 \email{moses@ee.technion.ac.il}
 \author{Nitzan Zamir}
 \orcid{...}
 \affiliation{Computer Science Department, Technion}
 \email{snussa@cs.technion.ac.il}
%Author\\
%Institution\\
%{\tt email}
%\and
%Author\\
%Institution\\
%{\tt email}
%\and
%Author\\
%Institution\\
%{\tt email}
%}
\fi

\begin{document}

\iffalse
\begin{CCSXML}
  <ccs2012>
  <concept>
  <concept_id>10010147.10010919</concept_id>
  <concept_desc>Computing methodologies~Distributed computing methodologies</concept_desc>
  <concept_significance>500</concept_significance>
  </concept>
  <concept>
  <concept_id>10003752.10003809.10010172</concept_id>
  <concept_desc>Theory of computation~Distributed algorithms</concept_desc>
  <concept_significance>500</concept_significance>
  </concept>
  </ccs2012>
\end{CCSXML}
\ccsdesc[500]{Computing methodologies~Distributed computing methodologies}
\ccsdesc[500]{Theory of computation~Distributed algorithms}
\fi

\date{}
\maketitle

%\keywords{Probabilistic protocols, probabilistic constraints, locality of information, probabilistic beliefs}
%\thanks{This work was funded in part by ISF grant 2061/19.
%Yoram Moses is the Israel Pollak academic chair at the Technion.}%\copyrightYear{2020}

\begin{abstract}

Whereas deterministic protocols are typically guaranteed to obtain particular goals of interest, probabilistic protocols typically provide only probabilistic guarantees. 
This paper initiates an investigation 
of the interdependence between actions and subjective beliefs of agents in a probabilistic setting. In particular, we study what probabilistic beliefs an agent should have when performing actions, in a protocol that satisfies a  probabilistic constraint of the form: {\it Condition~$\varphi$ should hold with probability at least~$p$ when action~$\act$ is performed}.  
%It is shown that~$i$ need not believe that~$\varphi$ is true with probability at least~$x$ on any nontrivial measure of the cases. 
Our main result is that the {\it expected degree of an agent's belief} in~$\varphi$ when it performs~$\act$ equals the probability that~$\varphi$ holds when $\act$ is performed.  Indeed, if the threshold of the probabilistic constraint should hold with probability $p=1-\vareps^2$ for some small value of~$\vareps$ then, with probability $1-\vareps$, when the agent acts it will assign a probabilistic belief no smaller than~$1-\vareps$ to the possibility that~$\varphi$ holds. 
In other words, viewing strong belief as, intuitively, approximate knowledge, the agent must \defemph{probably approximately know} (PAK-know) that~$\varphi$ is true when it acts. 

\end{abstract}

\iffalse
\thispagestyle{empty}
    \newpage
\setcounter{page}{1}
\fi

\section{Introduction}
\label{sec:Intro}
One of the defining features of distributed and multi-agent systems is that an agent's actions can only depend on its local information.
While this local information cannot typically contain a complete description of the state of the system, it must still  be sufficiently rich to support the actions that the agent takes.  
Thus, for example, in runs of a mutual exclusion (ME) protocol, an agent that enters the critical section  must know (based on its local state) that no other agent is in the critical section (cf.\ \cite{DijkstraME,ChandyMisra}). 
%Similarly, an ATM must know that a customer has good credit in order to dispense cash to the customer~\cite{KoP-paper}.
In probabilistic protocols, or even in deterministic protocols that operate in a probabilistic setting, the constraints on actions are often specified in probabilistic, rather than absolute, terms. One could, for example, consider a probabilistic requirement stating that upon entry to the critical section, it  should be empty with very high probability, rather than in all cases. 
In this case, the connection between an agent's actions and its local information is apparently not as tight. This paper initiates an investigation of the connection between the two in protocols that satisfy probabilistic constraints.
The following example provides a taste of the subject matter and will serve us to discuss some of the issues involved:  
\begin{example}
\label{ex:1}
We are given a synchronous message-passing system with two agents, Alice and Bob. At any given round, each agent can send messages to the other, and can either perform a ``firing'' action ($\fireA$ and  $\fireB$) or skip. 
Communication between them is unreliable, with every message sent being lost with probability 0.1, and being delivered in the round in which it is sent with probability~\mbox{0.9}. No message is delivered late, and probabilities for different messages are independent. Both agents begin operation at time~0, and Alice is assumed to have a binary variable ``$\go$'' in her initial state, whose value is~0 with probability~0.5, and is~1 otherwise. Given the unreliability of communication, it is not possible to ensure that both agents will always fire simultaneously. Instead, we consider the {\em relaxed firing squad problem}, in which they do so with high probability: 
\begin{itemize}
    \item[\Spec{\sf:}] If $\go=0$ then neither agent ever fires; while \\if $\go=1$ they should attempt to coordinate a joint firing. In particular, \\
    The probability that both agents fire, given that Alice  is firing, should be at least 0.95. 
\end{itemize}
Now consider the following protocol, $\FS$, in which, when $go=1$ Alice sends two messages to Bob in the first round, and fires at time~2 (in the third round). When $\go=0$ she sends no messages and never fires. 
Bob acts as follows: If he receives at least one message from Alice in the first round, he sends her a $\YES$ message in the second round and fires in the third. If, however, he receives no message in the first round, then he sends Alice a $\NO$ message in the second round and never fires. 

It is easy to verify that $\FS$ satisfies~\Spec. The agents never fire when $\go=0$, and if $\go=1$ then Alice fires with probability~1 at time~2, and they both fire at time~2 with probability $0.99\ge 0.95$, as desired. 
Observe, however, that in a run of~$FS$ in which $\go=1$, both of Alice's messages are lost, and Bob's $\NO$ message is delivered to Alice, she fires at time~2 despite being absolutely certain that Bob is not firing.  
%We do not suggest that the protocol~$\FS$ is the most sensible solution to the problem; it is presented here for the sake of analysis. 
The $FS$ protocol in this example shows that in a probabilistic protocol that succeeds with high probability, agents are not always required to act only when they believe that their actions will succeed.%
\footnote{Our example is based on one due to Halpern and Tuttle in~\cite{HalpernTuttle}, in which the same behavior appears. We do not suggest that the protocol~$\FS$ is the most sensible solution to the problem; it is presented here for the sake of analysis. } 
\hfill $\square$
\end{example}

Note that  Alice  %if  $\go=1$ initially, but when she does, 
can have three different states of information when she fires in~$\FS$, corresponding to whether she received a $\YES$ message in the second round, she received a $\NO$ message, or received no message from Bob.
In the latter case, Alice knows that Bob's message was lost, but is unsure about what he sent. 
Roughly speaking, in this case Alice ascribes a probability of $.99$ to the event that Bob is firing, since that is the probability that he received at least one of her messages. 
If Alice received a $\YES$ message, then she knows for certain that Bob is firing %(and so $\both$ is true) 
when she fires, while if she received a $\NO$ message, %. In this case 
she is certain that Bob is {\em not} firing.

%Denoting by $\both$ the event that both agents are firing at the current time, let us take a closer look at Alice's information about~$\both$ when she fires in runs of~$\FS$. 
%(Of course, when Alice fires, $\both$ coincides with the event that Bob is firing.) 
%%\red{The problem specification requires that the probability of~$\both$ given that Alice fires in the run should be at least 0.95.}
%Alice fires in~$\FS$ if  $\go=1$ initially, but when she does, she can have three different states of information, corresponding to whether she received a $\YES$ message in the second round, she received a $\NO$ message, or received no message from Bob.
%In the latter case, Alice knows that Bob's message was lost, but is unsure about what he sent. 
%Roughly speaking, given the properties of the communication channel, in this case Alice ascribes a probability of $.99$ to the event that Bob is firing, since that is the probability that he received at least one of her messages. 
%If Alice received a $\YES$ message, then she knows for certain that Bob is firing (and so $\both$ is true) when she fires. The third option is that she received a $\NO$ message. In this case she is certain that Bob is {\em not} firing, and so she knows that $\both$ does not hold when she fires.
%
%

%XXXThis connection between what a process in a distributed system must know and its ability to perform certain actions of interest is inherent, and does not 
%\marg{Big GAP}
The connection between knowledge and actions that we discussed at the outset applies very broadly and  is not specific to mutual exclusion. Indeed, recent work has shown that it holds generally in distributed systems, for deterministic actions and deterministic goals~\cite{KoP-paper}. 
More precisely, a theorem called the {\em Knowledge of Preconditions Principle} (or \KoP, for short) establishes that if some property~$\cond$ (e.g., ``{\it no agent is in the critical section}'') is a necessary condition for performing an action~$\act$, then an agent must {\em know} that~$\cond$ holds when it performs~$\act$. This is a universal theorem, which applies in all systems, for all actions and conditions, provided that the action is deterministic and  that~the condition must surely hold whenever the action is performed. Explicit use of the KoP has facilitated the design of efficient
protocols for various problems~\cite{Beyond,Unbeatable,Silence}, which improved on the previously best known solutions, sometimes by a significant margin.

This paper seeks to generalize the \KoP\ to probabilistic distributed systems, where both protocols and guarantees may be probabilistic. 
Probabilistic distributed systems are of interest in a wide range of settings. 
Probabilistic protocols are used to facilitate symmetry breaking, load balancing and fault-tolerance \cite{BenOr,FeldmanMicali, Aloha,Upfal,Luby,Grisha,Michal}. Participants in interactive proofs, and more generally in cryptography, typically follow probabilistic protocols \cite{BabaiMoran, GMR}.
Similarly, in many competitive settings such as games, auctions and economic settings, agents follow probabilistic strategies that give rise to probabilistic systems~(see, e.g., \cite{osborne1994course}). 
%\NitzCmnt{Probabilistic protocols for auctions/ economy/ game theory? \cite{dumas2005probabilistic, conitzer2009auction,?}}
Often, distributed protocols operate in a context in which the environment (or scheduler) is probabilistic, as in the case of population protocols and many network protocols \cite{angluin2006computation, lindgren2003probabilistic,baras2003probabilistic}. 
Roughly speaking, a distributed system is probabilistic%
\footnote{Technical terms such as a probabilistic distributed system, knowledge, probability and probabilistic beliefs are used loosely in the introduction; 
all relevant notions are formally defined in the later sections.}
 if  the protocol,  the environment, or both, are probabilistic. 
Distributed biological systems such as ant colonies, the brain, and many more, are often best modelled in probabilistic terms
\cite{navlakha2015distributed, colorni1991distributed, afek2011biological, afek2013beeping, feinerman2012memory}. 
Since probabilistic systems are ubiquitous, a good understanding of the interaction between action and information in such systems may provide useful insight into such systems.  

Specifications of deterministic protocols in non-probabilistic systems typically require them to satisfy a set of definite constraints, which must be satisfied in {\em all} executions. 
E.g., all runs of a consensus protocol must satisfy Decision, Validity and Agreement~\cite{lynch1996distributed}. Similarly, all runs of a mutual exclusion protocol must satisfy the exclusion property at all times. For probabilistic systems, correctness may be specified in several ways. In some cases, specifications are definite, and probability is only used to break symmetry or affect the probability or timing of reaching a desired goal. This is the case, for example, for Agreement in Ben-Or's consensus protocol~\cite{BenOr}, or in the mutual exclusion protocols considered in~\cite{pnueli1983extremely}. 
In other cases, however, the protocol is required to succeed with high probability. This is the case, for example, in interactive proofs~\cite{GMR,BabaiMoran}, as well as in several well-known consensus protocols (e.g.,~\cite{rabin1983randomized,FeldmanMicali}) in which disagreement can occur.
Of course, it is similarly possible to relax the correctness of ME protocols, by requiring that the probability of the exclusion property failing should be small. Indeed, in the setting of Example~\ref{ex:1}, no protocol can coordinate attacks and ensure that no agent will ever attack alone. 
We remark that in a related setting agents may be required to act only if they strongly believe that their actions will succeed. Namely, a judge is required to find a defendant guilty only if she believes him to be guilty beyond a reasonable doubt. Taking a probabilistic interpretation, we may take this to mean that a guilty verdict is allowed only if the judge very strongly believes in the defendant's guilt~\cite{whitman2008origins}. (Interestingly, in civil cases in the UK, the requirement for a judgement is that fault be proved on  a ``{\it balance of probabilities}''~\cite{Reece1996losses}. This means, roughly speaking,  that one scenario is believed to be more likely than its converse.)

Given the probabilistic nature of events in a probabilistic system, in addition to knowledge that certain events hold in an execution, an agent may have probabilistic beliefs about relevant facts. Very roughly speaking (and informally for now), let us denote by $\pBel_i(\varphi)$ agent~$i$'s degree of belief that a given fact~$\varphi$ holds. In the protocol~$\FS$, for example, if Alice receives a~$\YES$ message from Bob at time~2,  then 
$\pBel_A(\fireB)=1$ (where $\pBel_A$ stands for Alice's belief and~$\fireB$ stands for ``{\it Bob is firing}''), while  $\pBel_A(\fireB)=0$ if she receives the message~$\NO$. 
In case Bob's message is lost, and so she receives neither, Alice is unsure whether Bob is firing, but her degree of belief in it is high (indeed, $\pBel_A(\fireB)=0.99$ in this case). 

 Our investigation will be most closely related to probabilistic guarantees in which protocols are required to succeed with high probability. Motivated by, e.g., the relaxed firing squad problem and relaxing mutual exclusion,  we are interested in guarantees %that of the form 
 that a certain condition (or fact) $\cond$ should be true with high probability, when (or given that) a particular action~$\act$ is performed. 
 We will call such a requirement,  which we  informally denote by \mbox{$\mu(\varphi\gvn\act)\ge p$}, a \defemph{probabilistic constraint} for a protocol. 
 In Example~\ref{ex:1} the probabilistic constraint can be expressed by
 $\mu(\both\gvn\fireA)\ge 0.95$, where $\both$ is the fact that both agents are currently firing, and $\fireA$ is the fact that Alice is firing. 
 
 We think of the value~$p$ in a probabilistic constraint of the form $\mu(\varphi\gvn\act)\ge p$ as the {\em desired threshold} probability that the solution should achieve. 
It is natural to think of an agent's beliefs as {\em meeting the threshold} of the probabilistic constraint $\mu(\varphi\gvn\act)\ge p$ at a point where~$i$ performs~$\act$ in a given execution if $\pBel_i(\varphi)\ge p$ at that point.  We are interested in studying the interaction between probabilistic constraints on an agent's actions, and her probabilistic beliefs when acting.
In Example~\ref{ex:1}, the~$\FS$ protocol satisfies the probabilistic constraint, even though the threshold is not always met when Alice fires. 
We notice, however, that this happens infrequently---Alice fires without her beliefs meeting the threshold only with a probability of $0.009=0.1\cdot0.1\cdot0.9$. In a measure $0.991$ of the runs in which Alice fires, the threshold is met when she fires. We remark that $0.991\ge 0.95$. Is this a coincidence, or can we prove that the threshold must be met whp? 
In addition to analyzing the necessary conditions on beliefs for satisfying a probabilistic constraint, we are interested in sufficient conditions. It is natural to conjecture that always meeting the threshold is a sufficient condition for satisfying a probabilistic constrain. Is that indeed the case?

Our investigation will be made with respect to a class of probabilistic systems that satisfy several simplifying assumptions. Nevertheless, the answers that it provides will offer new insights into the connection between actions and probabilistic beliefs. In particular, we will consider finite purely probabilistic systems (pps for short). 
% We will focus on actions~$\act$ that an agent~$i$ can perform at most once in a given run, like a decision on a value in a consensus protocol. 
% To facilitate the discussion of the measure of runs in which~$i$ has strong beliefs when it acts, we extend our notation and 
% write $\pBel_i(\cond)@\act $ to be the value of $\pBel_i(\cond)$ when~$i$ performs~$\act$ in the run, given that it does so. 

$~$\\[-1.8ex]
The main contributions of this paper are:\\[-2.2ex]
\begin{itemize}[leftmargin=.15in]
% \item \NitzCmnt{A dedicated formulation that enables the discussion and analysis of subtle aspects of beliefs in probabilistic systems.}
\item We initiate a systematic study of the connection between actions and probabilistic beliefs for a wide range of probabilistic protocols, and for general conditions~$\varphi$. In particular, we consider beliefs when agents perform {\em mixed} actions, in which, e.g., a concrete action~$\act$ is performed only with a certain probability. Modeling, formulating, and proving these connections
is a subtle matter.
\item We first consider sufficient conditions on beliefs for ensuring that a probabilistic constraint is satisfied. Perhaps unexpectedly, always meeting the threshold is not, in general, a sufficient condition. It may fail to be sufficient when agents perform mixed actions (i.e., the choice of action at a point is a probabilistic function of the local state). 
But all is not lost. We identify an independence condition of the condition~$\varphi$ from the action~$\act$, under which it is shown to be sufficient. Moreover, the independence condition appears to hold in most cases of practical interest.  
% satisfying the constraint $\mu(\varphi\gvn\act)\ge p$:
% Roughly speaking, if  $\pBel_i(\cond)\ge p$ whenever~$i$ 
%performs~$\act$, then $\mu(\cond\gvn\act)\ge p$. For this to hold for general mixed actions, $\varphi$ and~$\act$ must satisfy an appropriate independence condition. 
\item It is shown that the threshold need not be met whp for a probabilistic constraint to be satisfied. In particular, we prove that there is \defemph{no positive lower bound} $\boldsymbol{\vareps>0}$  on the measure of runs in which the threshold must be met when~$\act$ is performed in order for a probabilistic constraint to be satisfied. 
\item 
We show that the {\em expected value} of the degree of belief plays a central role in establishing the probabilistic constraints. 
Our main theorem is that if~$\varphi$ and~$\act$ satisfy the independence property mentioned above, then the {\em expected  value} of~$\pBel_i(\varphi)@\act$ in the system is equal to $\mu(\varphi\gvn\act)$. 
\item As a corollary of this theorem, we can prove that in order to satisfy a probabilistic constraint, an agent must, with high probability,  strongly believe that the condition holds when it acts. 
More formally, suppose that $\mu(\cond\gvn\act)=1-\vareps^2$ for some~$\vareps\in[0,1]$. 
Then  $\pBel_i(\varphi)@\act \ge  1-\vareps$ in probabilistic measure at least $1-\vareps$ of the runs in which~$i$ performs~$\act$. 
%For $\vareps=0$ this essentially reduces to the original \KoP. 
Of course, if $1>\vareps>0$ then $1-\vareps$ is smaller than the constraint's threshold
of $1-\vareps^2$. However, if~$\vareps$ is small, this means that~$i$ must \defemph{probably approximately know} that~$\cond$ holds when it performs~$\act$.
\end{itemize}

The 1980's saw the emergence of formal models of knowledge and beliefs in distributed systems. A thorough presentation elucidating a variety of subtle aspects involved in modeling and reasoning about probabilistic beliefs appears in~\cite{HalpernUncertainty}. While Fagin and Halpern~\cite{fagin1994reasoning} presented a general model in which agents' probabilistic beliefs can be expressed, our presentation is most closely related to \cite{FischerZuck,HalpernTuttle}. Following~\cite{FischerZuck}, we model a probabilistic system in terms of a synchronous execution tree whose edges are labeled by transition probabilities. 
In \cite{FischerZuck}, Fischer and Zuck state that if a deterministic protocol for coordinated attack guarantees that an attack is coordinated with probability~$p$, then the ``average'' belief of~$A$ 
in the fact that~$B$ is attacking, when it attacks, is at least~$p$.
A closer reading of \cite{FischerZuck} reveals that this property of the average belief is precisely the probabilistic constraint that A must, with probability at least~$p$,  believe that both are attacking when~$A$ 
attacks.  
%in a coordinated attack when prior probability of a coordinated attack when $A$ is attacking, given that it attacks.
We take the investigation one step further, and characterize what $A$'s beliefs need to satisfy in order to ensure that such a constraint is satisfied. 
Moreover, while they consider two concrete examples and deterministic protocols, 
we investigate arbitrary probabilistic constraints, in a setting that allows for general probabilistic protocols. 
Halpern and Tuttle consider several different notions of probabilistic beliefs, and show that they correspond to different modelling assumptions. They relate probabilistic beliefs to notions of safe bets, and discuss how coordinated attack is related to different notions of belief and to probabilistic common belief. Our notion of probabilistic beliefs is what~\cite{HalpernTuttle} refer to as $\mathcal{P}^{\mathrm{post}}$, or the agent's posterior beliefs obtained by conditioning on her local state.

Reasoning about agents' probabilistic beliefs is well established in game theory~\cite{aumann1976agreeing,osborne1994course,SametMonderer1989}. 
In this literature, agents are typically assumed to possess a {\em common prior}, which is a central property of the purely probabilistic systems that we consider. Their models are normally based on a fixed universe of states of the world, in which actions are not explicitly modeled. Fagin and Halpern~\cite{fagin1994reasoning} as well as Monderer and Samet~\cite{SametMonderer1989} present a novel notion of probabilistic common beliefs and discuss its applicability. 

%\marg{Revisit.}
%Reasoning about bounded knowledge also appears in the literature \cite{moses1988resource,HMT1988zeroKnowledge}.
%Such knowledge relate to the information an agent can practically {\em compute} under resources or computation-time limitations.
%{\em Resource-bounded knowledge} and {\em practical knowledge} (computation-time bounded knowledge) are both used to analyze the knowledge and beliefs an agent can gain during a possible execution of a probabilistic system.
%Practical knowledge, for example is useful to describe interactive proofs, and to prove a zero-knowledge property of interactive proofs \cite{HMT1988zeroKnowledge}. \marg{Both FZ and HT discussed HMT regarding probability assignments for events that have already happened but are unknown to the agents- $\Pfut$.}

This paper is organized as follows. The following two sections present our model of probabilistic systems, and the notion of subjective probabilistic beliefs. % \marg{complete the list}
%\Cref{sec:model,sec:Beliefs in Sync sys} presents \red{...}.
Analyzing probabilistic constraints of the form described above,
\Cref{sec:sec4label} shows that %in most cases of interest 
under a certain independence assumption,
meeting the threshold of a probabilistic constraint (i.e., holding a strong belief) is a sufficient condition for satisfying it.
%considers \red{...} and proves \red{...}.
\Cref{sec:sec7label} shows that %a strong belief 
the threshold must at least sometimes be met, but there is no lower bound on the measure of runs in which it must  be met. 
\Cref{sec:sec5label} presents our main result, proving that in order to satisfy a probabilistic constraint, the agent should, in expectation, hold a strong belief.
Finally, \Cref{sec:sec6label} shows that in order to satisfy a probabilistic constraint an agent should hold a strong belief with high probability. 
%\Cref{sec:discussion} discusses our results and how they may be extended to other classes of probabilistic distributed systems.
%Finally, \Cref{sec:conclusions} presents concluding remarks. 
The Appendix contains proofs of all formal statements in the paper. 

%%%%%%%%%%%%%%%%%%%%%%%%%%%%

\section{Model and Preliminary Definitions}
\label{sec:model}
Reasoning about knowledge, beliefs, and probability in distributed systems can be rather subtle, and it has received extensive treatment in the literature over the last four decades. The foundations of our modeling of  distributed systems are based on the interpreted systems framework of \cite{FHMV}, and the modeling of probability and probabilistic beliefs in distributed systems is based on \cite{FischerZuck,HalpernUncertainty,HalpernTuttle}. 
%Very relevant to our work are the earlier papers by Fischer and Zuck~\cite{FischerZuck} and by Halpern and Tuttle \cite{HalpernTuttle}. 
%In this section we present just enough of the model support an analysis that will allow the derivation of our main results. 
An important issue that arises when considering probabilistic beliefs (as discussed extensively in~\cite{FischerZuck,HalpernTuttle}) has to do with the interaction between  nondeterministic choices and probabilistic beliefs. 
For example, consider a model that differs from that of Example~\ref{ex:1} only in that the value of Alice's local variable~$\go$ is set nondeterministically, rather than probabilistically. In a run of the protocol~$\FS$ in that model in which  Bob  does not receive any message at time~1, 
what can we say about Bob's beliefs at  time~2 regarding whether Alice is firing? Roughly speaking, $\fireA$ is not a measurable event for Bob at that point, because we have made no probabilistic assumptions about whether~$A$'s flag is initially in the~$\go=1$ or the~$\go=0$ state.  
We can think of the state of the flag as being a nondeterministic choice made by the scheduler\footnote{The scheduler can be thought of as being ``Nature" or ``the environment;'' we use the terms interchangeably.} before
time~0. Similar issues regarding measurability arise with other nondeterministic actions by the scheduler or by the agents. In a probabilistic protocol for consensus, for example, it is typically assumed that the scheduler can freely determine who the faulty agents are, and how they act. 
As pointed out by Pnueli~\cite{pnueli1983extremely} and discussed by~\cite{FischerZuck,HalpernUncertainty,HalpernTuttle}, the way to formally handle reasoning about probabilities in the presence of nondeterminism is to 
consider as fixed the set of all nondeterministic choices in an execution. 
Halpern and Tuttle
in \cite{HalpernTuttle} consider fixing such a set as determining a particular ``adversary.''
In particular, an adversary can determine a unique initial global state or, more generally, a distribution over initial states. 
Once we fix the adversary, all choices, whether those by the scheduler or those by the agents, are purely probabilistic. 
We will study the relation between actions and probabilistic beliefs when a protocol is executed in the context of a fixed adversary. In this case, the set of runs of the protocol can be modeled as a tree; 
we proceed as follows.

%For concreteness and for ease of exposition we will restrict our analysis to a class of systems that we call purely probabilistic, which will be defined shortly. The intuition for this, which originates from \cite{FisherZuck,HalpernTuttle}, is as follows.  
%
%We model distributed systems using the interpreted systems framework of \cite{FHMV}, and model probability in such systems along the lines of \cite{HalpernUncertainty}. We now present just enough of the model to support an analysis that will allow the derivation of our main results. For related short presentations of probabilistic models see \cite{FischerZuck,HalpernTuttle}.  

\subsection{Probabilistic Systems}
\label{sec:systems}

We assume a set $\Proc=\{1,2,\ldots,n\}$ of~$n$ agents, and a scheduler, denoted by~$e$, which we call the {\em environment}. 
A {\em global state} is a tuple of the form $g=(\lEnv,\ell_1,\ell_2,\ldots,\ell_n)$ associating a {\em local state} $\li$ with every agent~$i$ and a state~$\lEnv$ with the environment.
We model a {\em purely probabilistic system} (pps) %as a Markov Decision Process (MDP) modelled 
by a finite labelled directed tree of the form $T=(V,E,\pi)$, where $\pi:E\to(0,1]$ assigns a probability to each edge of~$T$. In particular,  
$\sum\limits_{w\in V}\pi(v,w)=1$ holds for every internal node $v$ of~$T$. 
All nodes of~$T$ other than its root correspond to global states. 
 The root is denoted by~$\rootv$, and its sole purpose is to define a distribution over its children, which represent {\em initial} global states. Every path from one of~ the root's children to a leaf is considered a {\em run} of~$T$, and we denote by~$\RT$ the set of runs of~$T$.
A run is thus a finite sequence of global states.
We denote the initial global state of a run~\mbox{$r\in\RT$} (which is a child of the root~$\rootv$ in~$T$) by $r(0)$ and its~$k+1^\mathrm{st}$ global state by~$r(k)$. 
Agent~$i$'s local state at the global state $r(t)$ is denoted by $\rit$.
We shall restrict attention to synchronous systems, meaning that the agents have access to the current time. 
Formally, we assume that every local state of agent~$i$ contains a variable~$\timei$, and whenever $\li=\rit$ in~$T$  
the value of~$\timei$ in~$\li$ equals~$t$.  Intuitively, this guarantees that every agent will always know what the current time is. (We restrict attention to synchronous systems, since modeling probabilistic beliefs in asynchronous systems is nontrivial, as discussed  in~\cite{HalpernUncertainty,HalpernTuttle}.) 
Formally, a pps~$T$  induces a probability space $\XT=(\RT,2^\RT,\muT)$ over the runs of~$T$. %
%\footnote{The assumption that~$T$ is finite greatly simplifies the exposition. It implies that~$\RT$ consists of a finite number of finite runs, and the probability space is easy to describe.}  
(This is commonly called a {\em prior} probability distribution over the set of runs.)
The probability distribution~$\mu=\muT$ is defined as follows. For a  run~$r=v_0,v_1,\ldots,v_k$, we write $\mu(r)$ instead of $\mu(\{r\})$, and define 
$\mu(r)=\pi(\rootv,v_0)\cdot\pi(v_0,v_1)\cdots\pi(v_{k-1},v_k)$. Thus, the probability of a run is the product of the probability of its initial global state and the transition probabilities along its edges. Based on our assumptions regarding~$T$, it is easy to verify that~$\XT$ is a probability space. 
Since~$\RT$ is finite and every run of~$\RT$ is measurable, every subset $Q\subseteq\RT$ is measurable, and 
$\mu(Q)=\sum\limits_{r\in Q}\mu(r)$. 
%With every global state $v\in V$ we associate the set of runs $Q_v=\{r|\,\mbox{$v$ appears in $r$}\}$. These are precisely the runs in~$T$ that extend the path from the root~$\rootv$ to~$v$. In~$T$, the probability of reaching~$v$ is precisely $\mu(Q_v)$, and it coincides with product of the transition probabilities on the path from~$\rootv$ to~$v$.

\subsection{Relating Protocols to Probabilistic Systems}\label{sec:Relating Protocols to Probabilistic Systems}
% As discussed in \cite{FischerZuck,HalpernTuttle}, a probabilistic distributed system can conveniently be modeled in terms of a set of purely probabilistic systems. 
% Once all of the nondeterministic choices have been factored out, the behavior of the environment and the agents is purely probabilistic. 
% We shall now briefly discuss the connection between probabilistic protocols and purely probabilistic systems. 

Let ~\iInNEnv\ denote an agent or the environment.  Denote by~$\Li$ the set of~$i$'s  local states, and by~$\Act_i$ the set of local actions it performs in a given protocol of interest. For simplicity, we assume that the sets~$\Act_i$ are disjoint.
A \textbf{(probabilistic) protocol} for~$i$ is a function $\Proti:\Li\to\Delta(\Act_i)$ mapping each local state $\li\in\Li$ to a distribution over~$\Act_i$. 
This distribution determines the probabilities by which~$i$'s action at~$\li$ is chosen. 
We assume that $\Proti(\li)$ assigns positive probability to a finite subset of $\Act_i$ for  every $\li\in \Li$.
When $\Proti(\li)$ assigns positive probability to more than one action, we say that the agent is performing a \defemph{mixed action step}, using the language of game theorists~\cite{osborne1994course}. The probabilistic choice in this case is made based on the local state~$\li$, and when the agent decides on the mixed step she does not know which of the actions in its support will actually be performed.

A {\em joint} protocol is a tuple $P=(P_e,P_1,\ldots,P_n)$. 
We will restrict attention to systems~$T$ in which, at every non-final point (i.e., a point that does not correspond to a leaf in the tree), the environment and each of the agents perform an action. 
Every tuple of actions performed at a global state~$g$ determines a unique successor state~$g'$ %. The action performed by~$i$ at~$\li$ is chosen according to~$\Proti(\li)$ independently from the other actions performed, thus every tuple of actions performed at a global state~$g$ also determines
as well as the probability~$\pi(g,g')$ of transition from~$g$ to~$g'$.

%Given a probability distribution on a finite initial global states, if the environment and all agents follow finitely terminating probabilistic protocols as above, then the set of runs of the system can be modeled by a pps~$T$. 
Given a probability distribution over the finite set of initial global states, if the environment and all agents follow probabilistic protocols that terminate in bounded time as above, then the set of runs of the system can be modeled by a pps~$T$. 
Indeed, since we assume that the support of $\Proti(\li)$ is finite in all cases, the number of successors of a global state~$g$ in runs of such a joint protocol~$P$ 
%in a setting where environment's protocol is~$P_e$ is finite. 
is finite.
In the setting of Example~\ref{ex:1}, since we are given a fixed probability of~0.5  that $\go=0$ in the initial global state, and a probability of 0.5 that $\go=1$,  the set of runs of the protocol~$\FS$ can be represented by a pps.
(If the value of~$\go$ were set nondeterministically, then the initial global state with $\go=0$ would define a pps, and the one with $\go=1$ would define another, seperate, pps; see \cite{HalpernUncertainty,HalpernTuttle}.)

In the sequel, we will need to keep track of what actions are performed in any given state by the various agents and by the environment. 
To this end, we will assume w.l.o.g.\ that at every global state  the environment's local state~$\ell_e$ contains a ``history'' component~$h$ that is a list of all actions performed so far, when each action was performed, and by which agent. 

%Recall that a pps is a tree, in which all runs start at sons of the root. This can be used to model the runs of a protocol that begin in a particular global state. 
%But it can also model a setting in which there are (finitely) many possible initial states, each chosen with a given probability. 
%In this case we can consider the root~$\rootv$ to be a preliminary global state~$\lambda$ (which could, intuitively, be thought to occur at time~$-1$) in which the environment performs a choice of the initial global state of the run according to the given probability, and the agents perform no actions (or null actions). 
%%%%%%%%
% In the setting of \Cref{ex:1}, given a fixed probability~$q$ that the initial state of~$A$'s flag is~$\go$, the protocol $\FS$ corresponds to a pps. (Otherwise, there is a pps for each one of the initial states.)%\footnote{\NitzCmnt{We repeat here a comment from the end of the paragraph from \Cref{sec:model}.}}

%\section{Probabilistic Beliefs}\label{sec:Beliefs in Sync sys}
\subsection{Facts in Probabilistic Systems}
Due to space limitations, we will not present a formal logic for reasoning about uncertainty in distributed systems (for this, the reader should consult~\cite{HalpernUncertainty}). Rather, we will cover just enough of the definitions to justify our investigation.  

We are interested in reasoning about conditions and facts such as whether agents perform particular actions, what agents' initial values were, etc. While some of these are properties of the run, others (e.g., ``{\it the critical section is empty}'') are transient, in the sense that they refer to the state of affairs at the current time, and their truth value can change from one time to another. 
%
%Naturally, probabilistic beliefs change over time. 
Therefore,  we will consider the truth of 
facts 
at {\em points} $(r,t)$, which refer to time~$t$ in a run~$r$. 
%Actions, too, are performed at a point.
We denote by  $\Pts(T)=\{(r,t): r\in\RT~\mbox{and}~r(t)~\mbox{is a node of}~ T\}$ the set of points of a pps~$T$.

A fact (or event) over a pps~$T$ is identified with a subset of $\Pts(T)$ which, intuitively, is the set of points at which the fact is true. 
We write $(T,r,t)\sat \varphi$ to denote that~$\varphi$ is true at the point $(r,t)$ of the system~$T$. 
For example,  we will later use the fact~$\acti$ stating that~$i$ is currently performing~$\act$.
Formally, we define %\\[.2ex]
$(T,r,t)\sat\acti$ to hold iff the history component $h$ %\footnote{\NitzCmnt{The notion of history is used only here and defined in \Cref{sec:Relating Protocols to Probabilistic Systems}.}}
 of actions in $r_e(t+1)$ records that $i$ performed~$\act$ at time~$t$.
In a similar fashion, we write $(T,r,t)\models \neg\varphi$ to denote that a fact~$\varphi$ is not true at the point $(r,t)$ of the system~$T$.

In some cases, we are interested in facts~$\psi$ that are properties of the run, such as ``{\em all agents decide on the same value}.'' 
Formally, we say that a fact~$\psi$ is a {\em fact about runs} in the system~$T$ if for all $r\in\RT$ and all 
times $t,t'\ge 0$ it is the case that $(T,r,t)\sat \psi$ iff $(T,r,t')\sat \psi$.
For facts~$\psi$ about runs, we write $(T,r)\sat\psi$ to state that~$\psi$ is true in the run~$r$ of the system~$T$.

Intuitively, for a transient fact~$\varphi$, the fact ``{\it $\varphi$ holds at some point in the current run}'' is a fact about runs. 
On several occasions, we will be interested in run-based facts of this type. 
In particular, for an action~$\act\in\Acti$ and for a local state~$\li\in\Li$, we will use $\racti$ and~$\rli$ to denote facts about runs defined as follows: 
%$$(T,r)\sat\racti\quad\mbox{iff}\quad (T,r,t)\sat\acti~\mbox{holds for some time~$t$} \quad.$$ %\\[.2ex]
%\[(T,r)\sat\rli\quad\mbox{iff}\quad \rit=\li~\mbox{for some time }t \quad.  ~~~~~~~~~~~~~\quad\quad \]
\begin{align*} 
(T,r)\sat\racti\quad &\mbox{iff}\quad (T,r,t)\sat\acti~\mbox{holds for some time~$t$} ~. \\[.5em]
(T,r)\sat\rli\quad &\mbox{iff}\quad \rit=\li~\mbox{for some time }t ~. 
%2x - 5y &=  8 \\ 
%3x + 9y &=  -12
\end{align*}

%
%For ease of exposition, we sometimes omit explicit mention of the system~$T$ when it is clear from context, and simply write %$(r,t)\sat\varphi$ for general facts\footnote{\NitzCmnt{We never use this notation for general facts}}~$\varphi$, or write 
%$r\sat\psi$ for a fact~$\psi$ about runs. 
%\red{\st{Finally, we say that a fact~$\varphi$ is \textbf{valid in}~$T$ and write $T\sat\varphi$ if $(T,r,t)\sat\varphi$ for all $(r,t)\in\Pts(T)$.}} 
%We remark that we can associate with every fact~$\psi$ about runs in~$T$ the set $Q_\psi$ of runs that satisfy~$\psi$. 

%\subsection{Subjective Beliefs}

\section{Probabilistic Beliefs}\label{sec:Beliefs in Sync sys}

%We associate with every agent~$i$ a  probabilistic belief, denoted by $\pBel_i(\cdot)$, at any given point, which is the posterior probability obtained by conditioning the prior on its local state. 
%More formally, we proceed as follows. 
%We identify the local information available to an agent with its local state. 
An event in the probability space~$\XT$ consists of a set $Q\subseteq\RT$ of runs. 
Since we can view a fact~$\theta$ about runs as corresponding to the set of runs that satisfy~$\theta$, we will often abuse notation slightly and treat facts about runs as representing events. 
We will thus be able to consider the probabilities of such facts, and as well as to condition events on facts about runs.

As is common in the analysis of distributed systems using knowledge theory, we identify the local information available to~$i$ at a given point with its local state there (see~\cite{FHMV}). 
In the current setting, in order to capture probabilistic beliefs, we associate with every local state $\li\in \Li$ in~$T$ the probability space 
\mbox{${\mathcal X}^{\li}=\big(\RT, 2^{\RT},\muT(\cdot\given\rli)\big)$}.
Considering the probability measure~$\muT$ induced by~$T$ as a prior probability measure on sets of runs, the assignment $\muT(\cdot\given\rli)$ captures agent~$i$'s subjective posterior probability. 
Recall that, by definition, $\pi$ assigns positive probabilities to all transitions in a pps~$T$. Consequently, $\muT(r)>0$ for every run $r\in\RT$, and hence \mbox{$\muT(\rli)>0$} for every local state~$\li$ that appears in~$T$.  
It follows that $\muT(Q\given\rli)$ is well-defined for every set of runs~$Q\subseteq\RT$.
%(and similarly for  $\muT(\theta\given\rli)$, where~$\theta$ is a fact about runs). 

%\textcolor{ForestGreen}{In the sequel, we will want to refer to properties of the point in a run at which a proper action~$\act$ is performed, or of the point at which~$i$ is in a given local state~$\li$.}\marg{\YrmCmnt{Needs an earlier explanation.}}

Our investigation of  beliefs in the context of probabilistic constraints will focus on whether (possibly transient) facts of interest hold {\em when} an agent acts, or when the agent is in a given local state. Since the current time is always a component of an agent's local state in a pps, a given local state~$\li$ can appear at most once in any particular run~$r$. This facilitates the following notation.  
We write $\varphi@\li$, for a state $\li\in\Li$, to state that~$\varphi$ holds when~$i$ is in local state~$\li$ in the current run. Formally, we define 
$(T,r)\models \varphi@\li$ iff both $(T,r)\models\rli$ (the local state occurs in~$r$), and $(T,r,t)\models \varphi$ holds for the point $(r,t)$ at which $\rit=\li$.

We can now define an agent~$i$'s degree of probabilistic belief in a fact~$\varphi$ at a given point as follows: 
\begin{definition}
The value of $\pBel_i(\varphi)$ at  $(r,t)\in\Pts(T)$ is defined to be $\muT(\varphi@\li\given\rli)$, where~${\li=\rit}$. 
%\[\pBel_i(\varphi)~~\eqdef~~ \muT(\varphi@\li\given\rli)~\mbox{for $\li=\rit$}.\]
\end{definition}
 
The value of $\pBel_i(\varphi)$ represents~$i$'s current degree of belief that~$\varphi$ is true. It depends on~$i$'s local state, and changes over time as the local state changes. 
%Notice that~$i$'s belief in~$\varphi$ can change over time, since its local state does. 

%\newpage
\subsection{Proper Actions}
Recall that probabilistic constraints impose restrictions on the conditions under which an action can be performed. In some cases, these are conditions are facts about the the run (e.g., {\em ``all processes decide~0 in the current run''} or {\em ``all initial values were~1''}), and in other cases they may be transient facts such as {\em ``the critical section is currently empty''}. Transient facts do not, in general correspond to measurable events in our probability space~$\XT$. 
%
% Analyzing requirements regarding properties that hold {\em when an agent performs an action}, requires the ability to relate to the point in time at which the action was performed, and brings up difficulties unless treated carefully. Assume a run of a protocol for ME in which the agent enters the critical section twice: one in which it is empty and the other in which it is occupied by another agent. Does this run considered a run in which the ME requirement is met? Roughly speaking, in this case the event ``{\it The critical section is empty when the agent enters it}" is not measurable.
% \NitzCmnt{Halpern, in \cite{halpern2004sleeping} discusses ....}
%
To overcome this difficulty we will restrict our attention to actions that are performed at most once in any given execution of the protocol. We proceed as follows:

We say that~$\boldsymbol{\act}$  \textbf{\em is a proper action for~$\boldsymbol{i}$ in~$\boldsymbol{T}$} if~$i$ performs~$\act$ at least once in~$T$ and, for every run \mbox{$r\in \RT$}, agent~$i$ performs~$\act$ at most once in~$r$. 
For a proper action~$\act$, the set of runs in which~$\act$ is performed is well-defined. 
Moreover, in such a run, the time, as well as the local state, at which~$\act$ is performed are unique. Technically, restricting actions to be proper will enable us to partition the set of runs in which an action~$\act$ is performed according to the local state at which~$i$ performs~$\act$. 

Our analysis will focus on an agent's beliefs {\it when it performs} a proper action. 
Restricting attention to proper actions does not impose a significant loss of generality. Either tagging an action with its occurrence index (e.g., ``{\em the third time~$i$ performs~$\act$}'') or timestamping  actions with the time at which they are performed (``{\em the action~$\act$ performed by~$i$ at time~$t$}''), 
can be used to convert any given action into a proper one. 

For a proper action~$\act\in\Acti$, 
we take $\varphi@\act$ to be a fact stating that~$\varphi$ holds in the current run when~$i$ performs~$\act$. Since~$\act$ is proper, this is a fact about runs.
Formally, we define
    $(T,r)\models \varphi@\act$ to hold iff both $(T,r)\models\racti$ (i.e., $\act$ is performed in the current run~$r$) and  $(T,r,t)\models \varphi$ holds for the (single) point $(r,t)$ of~$r$ at which $(T,r,t)\models \acti$.

Since we are interested in~$i$'s beliefs when she performs an action, we will similarly use $\boldsymbol{\pBel_i(\varphi)@\act}$ to 
refer to~$i$'s degree of belief in~$\varphi$ when it performs~$\act$. Formally, we define  $\boldsymbol{\big(\pBel_i(\varphi)@\act\big)[r]}$ to be the value of~$\pBel_i(\varphi)$ at the point $(r,t)$ at which $(T,r,t)\models \acti$.
By convention, if~$i$ does not perform~$\act$ in~$r$, then $\big(\pBel_i(\varphi)@\act\big)[r]=0$ for every fact~$\varphi$. 

\begin{definition}%[Probabilistic Constraints]
\label{def:prob-const}
A \defemph{probabilistic constraint} on an action~$\act$ in a pps~$T$ is a statement of the form \[\boldsymbol{\muT(\varphi@\act \Bigiven \ract) \ge p}~~.\]

\end{definition}
%We can now formulate the notion of a probabilistic constraint in our investigation as:  $\boldsymbol{\muT(\varphi@\act \Bigiven \ract) \ge p}$.
%, (i.e., ``{\em there is a high probability that the critical section is empty when an agent enters it given that it does so}'').

For a fact~$\psi$ about runs, 
%~$\psi$ (e.g., in a consensus protocol, ``{\em $i$'s initial value is~$1$}'' or ``{\it all decisions are~0}''), 
the form of a probabilistic constraint becomes much simpler. In this case, $(T,r,t)\models \psi@\act$ iff both \mbox{$(T,r)\models \psi$} and $(T,r)\models \ract$ hold. Since~$\muT(r \Bigiven \ract)=0$ for every run~$r$ at which~$\act$ is not performed by~$i$, the constraint becomes simply $\boldsymbol{\muT(\psi \Bigiven \ract)\ge p}$, 
since $\muT(\psi@\act \Bigiven \ract)=\muT(\psi \Bigiven \ract)$.

We are now ready to start our formal investigation.

%If $(r,t')\sat\act$ for exactly one time~$t'$, then we can define $\boldsymbol{\pBel_i(\varphi)@}{\boldsymbol{\act}}$ {in}~$\boldsymbol{r}$
%to be the value of $\pBel_i(\varphi)$ when~$i$ performs~$\act$ in~$r$. More formally, if~$i$ performs~$\act$ at $(r,t')$, then in~$r$ we have that $\Belat{i}{\varphi}{\act}=\pBelli(\varphi@\act)$, for the local state $\li=r_i(t')$. 
%If $i$ does not perform~$\act$ in~$r$, then $\Belat{i}{\varphi}{\act}\eqdef 0$.
%% in~$r$ is undefined.
%By definition, $\Belat{i}{\varphi}{\act}$ is a property (and hence a function) of runs. 

%\blue{For every~$x\in[0,1]$ we now define the value of $\pBel_i(\varphi)$ at a point $(r,t)$ 
%to be \\
%$(T,r,t)\sat \pBel_i(\varphi)=x$\quad iff\quad $\pBel(\rit)(\varphi)=x$. }

\section{The Sufficiency of Meeting the Threshold}
%\section{Strong Belief vs.\ High Probability}
\label{sec:sec4label}
%\subsection{Strong Belief is Sufficient}

%\subsection{Strong Belief when Acting Vs. High Probability}
%\subsubsection{Strong Belief when Acting Implies High Probability}

%A formal generalization of their statement to our model, which relates actions as well, is presented in the following theorem: 
%

Intuitively, acting only under strong beliefs should suffice for guaranteeing probabilistic constraints. 
Namely, we would expect that 
for every proper action~$\act$ of~$i$ and fact~$\varphi$, 
if 
$\pBel_i(\varphi)\ge p$ 
holds whenever %
%at every point at which
~$i$ performs~$\act$ in~$T$, then 
    ~$\muT(\varphi@\act \gvn \racti)\ge p$ (i.e., if~$i$ performs~$\act$ only when its belief in~$\varphi$ meets the threshold of a given constraint, then the constraint will be satisfied). 
    This is indeed true in many cases of interest. Perhaps somewhat unexpectedly, it is not true in general. 
     For an example in which it fails,  consider the system~$T$ depicted in \Cref{fig:dependence} in which there is a single agent, called~$i$, and a single initial global state~$g_0$. At time~$0$ the agent~$i$ performs either~$\act$ or~$\act'$, each with probability~$\frac{1}{2}$. The resulting pps~$T$ contains two runs, $r$ in which~$i$ performs $\act$, and~$r'$ in which it performs~$\act'\ne\act$. Let the fact of interest be $\psi=\neg\acti$. 
It is easy to check that $\muT(\psi@\act\Bigiven\,\racti)=0$, since by definition, $\act$ is performed precisely whenever~$\psi$ is false.  
As far as~$i$'s beliefs are concerned, $\pBel_i(\psi)=\frac{1}{2}$ when~$i$ performs~$\act$, since~$i$'s local state at the initial global state~$g_0$ guarantees with probability~$\frac{1}{2}$ that~$\act$ will not be performed.
We thus have that $\pBel_i(\psi)\ge \frac{1}{2}$ whenever~$i$ performs~$\act$ in~$T$, while 
$\muT(\psi@\act \gvn \racti)=0<\frac{1}{2}$.

 \begin{figure}[h]
\centering
\includegraphics[width=3in]{%figs/
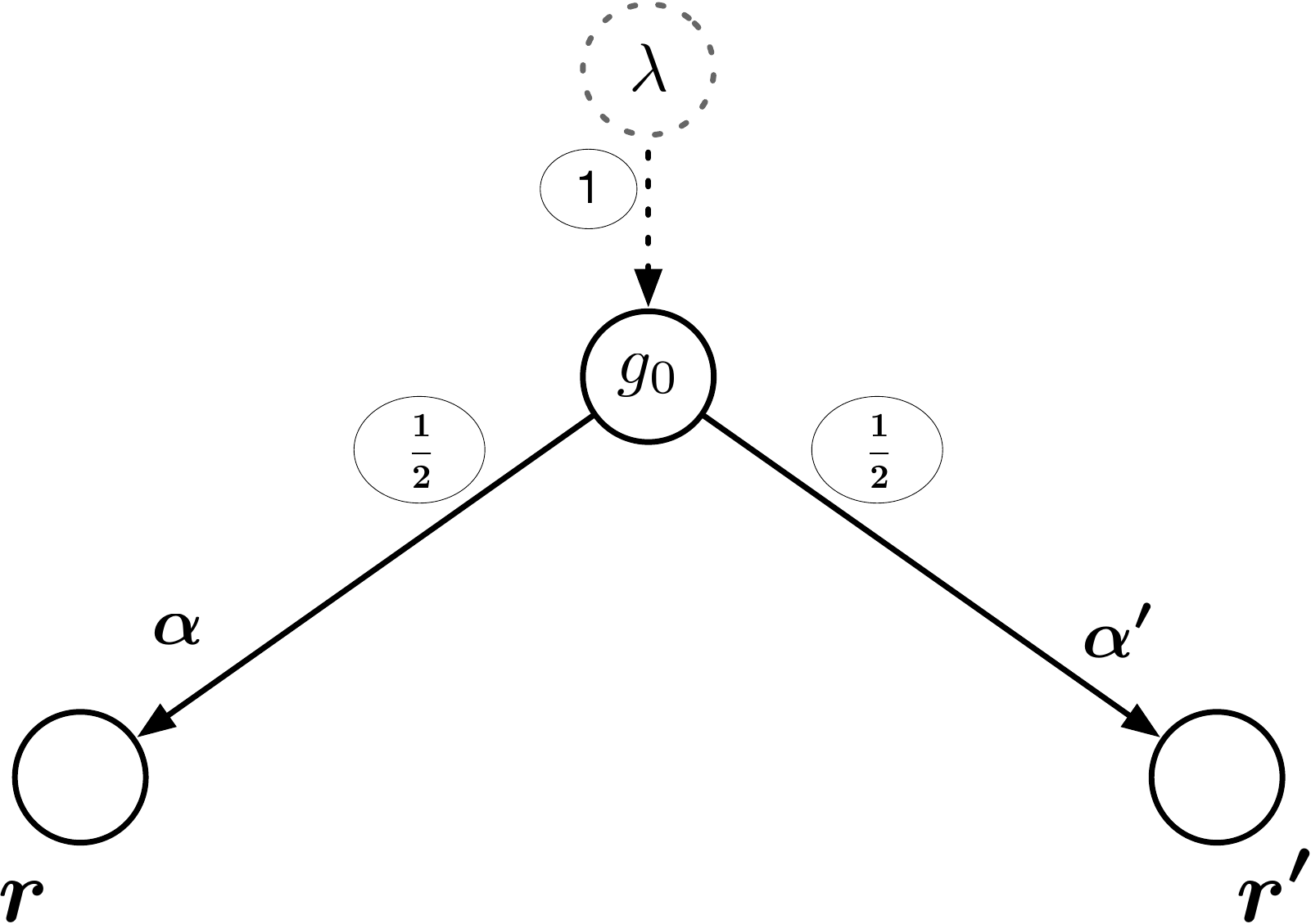}
\caption{An example of a pps~$T$ %\red{in which high belief does not imply high probability.}
.}
\label{fig:dependence}
\end{figure}

%$~$\par
\vspace{5mm}
In this example, the condition~$\psi=\neg\acti$ of interest depends strongly on whether the action~$\act$ is performed. This is unlikely to be the case in typical probabilistic constraints. We remark, however, that the claim can be shown to fail in more natural scenarios, such as when the action~$\act$ consists of sending a particular message and~$\psi$ depends on whether its recipient  acts in a particular way in a future round. 

The problem arises from the dependence between $\varphi$ and~$\acti$. We now present  an independence assumption that holds in many cases of interest, under which the desired property holds. 
%and suffices to yield the sufficiency of strong beliefs. 

\begin{definition}\label{def:local-state indep}
%Let~$\varphi$ be a fact over a pps~$T$, and l
Let~$\act$ be a proper action for agent~$i$ in~$T$. We say that~$\varphi$ is \textbf{\em local-state independent} of~$\act$ in~$T$ if, for all $\li\in\Li$ it is the case that
%\footnote{\NitzCmnt{Narrowing down to $\li\in\Li[\act]$ only would suffice for the proof, but I prefer it to be defined for all $\li\in\Li$.}}:
$$\muT(\varphi@\li\Bigiven\,\rli) \cdot \muT(\act@\li\Bigiven\rli)~~=~~\muT\big([\varphi\wedge \act]@\li \Bigiven\rli \big)~.$$
\end{definition}

Intuitively, local-state independence implies that the probability that~$\varphi$ will hold when agent~$i$ performs the action~$\act$ is  independent of the local state at which~$\act$ is performed. 
We can now show:

\begin{restatable}{theorem}{recallBeliefImpliesProb}\label{thm:belief implies prob}
%Let $T$ be a pps, let $\act$ be a proper action for an agent~$i$ in~$T$, and let $\varphi$ be local-state independent fact of~$\act$ in~$T$.
Let $\act$ be a proper action for agent~$i$ in a pps~$T$, and let~$\varphi$ be a fact that is local-state independent of~$\act$ in~$T$.
If $\pBel_i(\varphi)\ge p$ at every point of~$T$ at which~$i$ performs~$\act$, then 
    ~$\muT(\varphi@\act \gvn \racti)\ge p$.
\end{restatable}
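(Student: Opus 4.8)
The plan is to express the conditional probability $\muT(\varphi@\act \gvn \racti)$ as the ratio $\muT(\varphi@\act)/\muT(\racti)$ --- valid because $\varphi@\act$ is defined to already entail $\racti$ --- and then to evaluate numerator and denominator by partitioning over the local state at which $i$ performs $\act$. Since $\act$ is a proper action and every local state of $i$ records the current time (hence occurs at most once per run), the event $\racti$ is the disjoint union $\bigsqcup_{\li\in\Li}(\act@\li)$, where $\act@\li$ denotes the event that $i$ performs $\act$ while in local state $\li$. This immediately gives $\muT(\racti)=\sum_{\li}\muT(\act@\li)$, and intersecting with $\varphi@\act$ gives $\muT(\varphi@\act)=\sum_{\li}\muT(\varphi@\act \wedge \act@\li)$.

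The first real step is the identity $\varphi@\act \wedge \act@\li = [\varphi\wedge\act]@\li$. On the event $\act@\li$ the unique point at which $\act$ is performed is exactly the point at which $i$ is in $\li$, so ``$\varphi$ holds when $\act$ is performed'' coincides with ``$\varphi\wedge\acti$ holds at $\li$''. Hence $\muT(\varphi@\act)=\sum_{\li}\muT([\varphi\wedge\act]@\li)$. Next, because each of $\act@\li$ and $[\varphi\wedge\act]@\li$ entails $\rli$, I would rewrite every summand by conditioning on $\rli$, writing $\muT([\varphi\wedge\act]@\li)=\muT([\varphi\wedge\act]@\li\gvn\rli)\cdot\muT(\rli)$ and likewise $\muT(\act@\li)=\muT(\act@\li\gvn\rli)\cdot\muT(\rli)$.

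Now I would invoke the local-state independence hypothesis to factor $\muT([\varphi\wedge\act]@\li\gvn\rli)=\muT(\varphi@\li\gvn\rli)\cdot\muT(\act@\li\gvn\rli)$, and then bring in the belief assumption. The key observation is that $\muT(\varphi@\li\gvn\rli)$ is precisely $\pBel_i(\varphi)$ at any point where $i$ is in $\li$, a quantity that depends on the local state alone and not on which branch of a (possibly mixed) action is taken. Thus for every $\li$ at which $\act$ is actually performed in some run, i.e. $\muT(\act@\li\gvn\rli)>0$, the hypothesis yields $\muT(\varphi@\li\gvn\rli)=\pBel_i(\varphi)\ge p$, so $\muT([\varphi\wedge\act]@\li)\ge p\cdot\muT(\act@\li\gvn\rli)\cdot\muT(\rli)=p\cdot\muT(\act@\li)$; for the remaining local states both sides are $0$ (since $\muT(\act@\li)=0$ forces $\muT([\varphi\wedge\act]@\li)=0$), so the bound holds trivially. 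Summing over $\li$ and using the partition of $\racti$ gives $\muT(\varphi@\act)\ge p\cdot\muT(\racti)$, and dividing by $\muT(\racti)>0$ (positive since $i$ performs $\act$ in $T$) completes the proof.

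I expect the main obstacle to be bookkeeping rather than any deep difficulty: carefully justifying the identity $\varphi@\act \wedge \act@\li=[\varphi\wedge\act]@\li$ together with the disjointness and exhaustiveness of $\racti=\bigsqcup_{\li}(\act@\li)$, both of which hinge on $\act$ being proper and on local states being time-stamped, so that the point where $\act$ is performed is the same as the point where $i$ is in $\li$. A secondary subtlety worth spelling out is the passage from the belief hypothesis, which is stated over points where $\act$ is performed, to the per-state quantity $\muT(\varphi@\li\gvn\rli)$, justified by the fact that $\pBel_i(\varphi)$ is a function of the local state only.
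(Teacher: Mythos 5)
Your proposal is correct and follows essentially the same route as the paper's proof: partition the event $\racti$ by the local state at which $\act$ is performed, use local-state independence (via the identity $\varphi@\act\wedge\act@\li=[\varphi\wedge\act]@\li$ and conditioning on $\rli$) to identify the per-cell conditional probability of $\varphi$ with $\pBel_i(\varphi)$ at that state, apply the belief hypothesis cell by cell, and sum. The only difference is presentational --- you work with unconditional measures and divide by $\muT(\racti)$ at the end, whereas the paper phrases the same computation as the law of total probability conditioned on $\racti$ together with its Lemma~\ref{lem:indep statement}.
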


\begin{proof}[Proof sketch]$\!\!\!$%
%\footnote{Except for \Cref{thm:no-bound}, whose proof is in the main text, full proofs of all theorems and lemmas appear in the Appendix.}
 ~We partition the event $\Ract$ consisting of the runs of~$\RT$ in which~$\act$ is performed according to~$i$'s local state~$\ell_i$ when it performs~$\act$. Using~$\varphi$'s local-state independence of~$\act$ in~$T$ we show that, in every cell of the partition, $\varphi$ holds with probability at least~$p$ when~$\act$ is performed. The claim then follows by the law of total probability.
%In order to prove the theorem we form a partition over the set of runs at which~$\act$ is performed. We use the independence property to show that, in every cell of the partition, $\varphi$ holds with probability at least~$p$ when~$\act$ is performed. Thus, by the law of total probability, we obtain that {$\muT(\varphi@\act \gvn \racti)\ge p$} holds.
%The full proof is somewhat cumbersome, and thus appears in the appendix section.
\end{proof}

\Cref{thm:belief implies prob} can be viewed as following from the Jeffrey conditionalization theorem in probability~\cite{Jeffrey}. Roughly speaking, Jeffrey conditionalization 
relates the prior  probability of an event when an experiment is performed, to its posterior probabilities given the possible outcomes of the experiment. 
As discussed in \Cref{sec:Beliefs in Sync sys}, and agent's probabilistic beliefs coincide with the posterior probability, obtained by conditioning $\muT$ on the realized local state. 
 Jeffrey conditionalization is also the basis of our main theorem, and we discuss it in slightly greater detail in \Cref{sec:Jeff}. 
As a statement about prior probabilities and probabilistic beliefs, \Cref{thm:belief implies prob} generalizes a result that Samet and Monderer \cite{SametMonderer1989} proved in a simpler setting. 
They considered a model that corresponds to a static system in which performing actions is not explicitly modeled. In our formalism, this would correspond to a ``flat'' pps~$T$ consisting only of a root node and its children (corresponding to initial states). They showed that in such a system, if an agent's expected degree of (posterior) belief regarding a fact~$\varphi$ is greater or equal to a value~$p$, then the objective (prior) probability of~$\varphi$ is, in itself, at least~$p$.

While the local-state independence property of \Cref{def:local-state indep} is needed for our proof of \Cref{thm:belief implies prob}, the theorem still applies in many (perhaps most) cases of interest.  
%As we show in Lemma~\ref{lem:det actions indep} in the Appendix, 
One case in which the problem does not arise is if~$\act$ never participates in a mixed action step. 
%, the theorem holds for all facts~$\varphi$. 
More formally, $\act\in\Act_i$ is called a \defemph{deterministic action} for~$i$ in a pps~$T$ if $\acti$ is a deterministic function of~$i$'s local state in~$T$.  I.e., if $\rit=r'_i(t)$ for two points $(r,t),(r',t)\in\Pts(T)$, then~$i$ either performs~$\act$ at both points, or does so at neither of them. 
Even for actions that participate in mixed action steps, local-state independence is guaranteed in many typical cases. 
%%
% In a pps, an action is deterministic iff it never participates in a mixed step.   Lemma~\ref{lem:det actions indep} in the Appendix shows that if~$\act$ is a deterministic proper action then~$\act$ and~$\varphi$ are local-state independent in~$T$ for each fact~$\varphi$. 
%%%
% Even for mixed actions, local-state independence is not a severe restriction. 
We say that two runs $r,r'\in\RT$ \defemph{agree up to time~$\boldsymbol{t}$}
~if they share the same prefix up to and including time~$t$. (I.e., if they extend the same time~$t$ node in~$T$.) 
We call $\varphi$ a \defemph{past-based} fact in $T$ if, for all pairs~$r,r'\in \RT$ of runs and times~$t\ge 0$, if~$r$ and~$r'$ agree up to time~$t$, then $(T,r,t)\models \varphi$ exactly if $(T,r',t)\models \varphi$.
Many reasonable conditions, including any fact about the current state of the system such as ``{\it $A$ is attacking}'', or ``{\it the critical section is empty},'' are past based. 
Sufficient conditions for local-state independence are given by: 

\begin{restatable}{lemma}{recallIndependence}\label{lem:independence}
Let~$\act\in\Acti$ be a proper action in a pps~$T$, and let~$\varphi$ be a fact over~$T$. If (a) $\act$ is deterministic in~$T$, or 
(b) $\varphi$ is past-based in~$T$, then 
%Each of the following two conditions is sufficient for ensuring that
$\varphi$ is local-state  independent of~$\act$ in~$T$. 
%\begin{itemize}
%    \item[(a)]\label{lsi.a} $\act$ is a deterministic action in~$T$, or 
%    \item[(b)]\label{lsi.b} $\varphi$ is a past-based fact in~$T$.  
%\end{itemize}
\end{restatable}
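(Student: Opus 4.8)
The plan is to prove both sufficient conditions by a single device: fix a local state $\li\in\Li$ and analyze the conditional space $\muT(\cdot\given\rli)$, aiming to verify the identity of \Cref{def:local-state indep} at that~$\li$. Write $t$ for the time encoded in~$\li$. Since $T$ is a tree and each local state records the time, the event~$\rli$ is the disjoint union, over the finitely many time-$t$ nodes $v_1,\dots,v_m$ of~$T$ at which~$i$'s local state equals~$\li$, of the sets of runs whose time-$t$ node is~$v_j$. I would set $p_j\eqdef\muT(r(t)=v_j\given\rli)$, so that $\sum_j p_j=1$, and express each of the three quantities in the definition as a sum along this partition. This is the only place where the tree structure of the pps is used.

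The argument for case (b) rests on two observations. First, whether $\varphi@\li$ holds is constant across all runs sharing the node~$v_j$: since these runs agree up to time~$t$, past-basedness of~$\varphi$ forces the truth value of~$\varphi$ at the time-$t$ point to be the same for all of them. Second, the conditional probability that~$i$ performs~$\act$ at time~$t$, given that the run is at~$v_j$, is a constant~$q$ independent of~$j$; this is because~$i$'s action is drawn according to~$\Proti(\li)$, which depends only on~$i$'s local state, so the probability of drawing~$\act$ equals $\Proti(\li)(\act)$ at every node whose $i$-local-state is~$\li$, i.e.\ at each~$v_j$. Letting $J$ collect the indices~$j$ for which $\varphi@\li$ holds on runs through~$v_j$, I then obtain $\muT(\varphi@\li\given\rli)=\sum_{j\in J}p_j$, $\muT(\act@\li\given\rli)=\sum_j p_j\,q=q$, and $\muT([\varphi\wedge\act]@\li\given\rli)=\sum_{j\in J}p_j\,q=q\sum_{j\in J}p_j$. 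Multiplying the first two gives exactly the third, establishing local-state independence.

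For case (a) the same computation degenerates, and no hypothesis on~$\varphi$ is needed. If~$\act$ is deterministic then $\acti$ is a function of~$\li$, so the constant $q=\Proti(\li)(\act)$ is either~$0$ or~$1$. When $q=0$, both $\muT(\act@\li\given\rli)$ and $\muT([\varphi\wedge\act]@\li\given\rli)$ vanish and the identity reads $0=0$; when $q=1$, the event $\act@\li$ holds throughout~$\rli$, so $[\varphi\wedge\act]@\li$ coincides with $\varphi@\li$ there and the identity reduces to $\muT(\varphi@\li\given\rli)=1\cdot\muT(\varphi@\li\given\rli)$. In both subcases the identity holds for an arbitrary~$\varphi$, matching the fact that (a) alone suffices.

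I expect the main obstacle to be justifying the second observation cleanly, namely that conditioning on the full global node~$v_j$ leaves the probability of $i$'s mixed step producing~$\act$ equal to what is already fixed by~$i$'s local state~$\li$. This relies on the modeling assumption that each agent samples its action from its own protocol distribution as a function of its own local state, independently of the remaining components of the global state, so that the per-node conditional probability of~$\acti$ equals $\Proti(\li)(\act)$ uniformly over the~$v_j$; I would state this assumption explicitly before invoking it.
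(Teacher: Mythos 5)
Your proof is correct and follows essentially the same route as the paper's: case (a) splits on whether $\act@\li$ holds for all or for none of the runs satisfying $\rli$, and case (b) uses that past-basedness makes $\varphi$'s truth value a function of the time-$t$ node while the protocol makes the probability of performing $\act$ a constant $\Proti(\li)(\act)$ across all nodes sharing the local state $\li$. Your explicit partition into the time-$t$ nodes $v_1,\dots,v_m$ merely makes precise a step the paper states informally, and the modeling assumption you flag at the end is exactly the one the paper also invokes (``the probability that $i$ performs $\act$ is the same at all points at which its local state is $\li$'').
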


\section{On the Necessity of Meeting the Threshold}\label{sec:sec7label}%Essentials? Minimal Belief?
Recall from protocol~$\FS$ that it is possible to satisfy the probabilistic specification of the relaxed firing squad problem, while allowing actions to be performed even in cases in which the probabilistic property of interest is not strongly believed (indeed, the example shows that it might not be believed {\em at all} in some, rare, cases). 
We can now ask, if we are required to satisfy a given probabilistic constraint of the form~$\muT(\varphi@\act \Bigiven \ract)\ge p$, what can be said about the probability that the agent's belief regarding~$\varphi$ when acting 
meets the threshold~$p$ of the probabilistic constraint? %is as high as the probability~$p$ required by the desired constraint? % (i.e., larger than~$p$)?
%
% As we now show, % in Lemma~\ref{lem:sometimes}, %although 
% \Cref{thm:mixedExpectationOfBeliefTrm} implies 
Indeed, we can show that there must be at least some cases in which belief in the property~$\varphi$ is as high as~$p$. %the desired constraint. %On the other hand, as we show in \Cref{thm:no-bound}, there is no lower bound on the measure of runs in which the guarantee must be matched. 
More formally: 

% Let $T$ be a pps, let $\act$ be a proper action for an agent~$i$ in~$T$, and let $\varphi$ be a fact over~$T$.
% Then the following corollaries of \Cref{thm:mixedExpectationOfBeliefTrm} hold:

\begin{restatable}{lemma}{recallSometimes}\label{lem:sometimes}
%\marg{statement needs to be updated because of the move; Why do we need independence?}Under the conditions of \Cref{thm:mixedExpectationOfBeliefTrm},
Let $\act$ be a proper action for agent~$i$ in a pps~$T$, and let a fact~$\varphi$ be local-state independent of~$\act$ in~$T$.
If \mbox{$\muT(\varphi@\act\Bigiven\,\racti)\ge p$}, then there must be at least one point $(r,t)$ of~$~T$ at which $\act$ is performed and $(T,r,t)\sat \,\,\pBel_i(\varphi)\ge p$.
\end{restatable}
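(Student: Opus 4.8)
The plan is to represent the left-hand side of the constraint as a \emph{weighted average} of the agent's beliefs at the points where $\act$ is performed, and then to invoke the elementary fact that a weighted average never exceeds its largest term. Concretely, I would first establish the identity
\[
\muT(\varphi@\act \Bigiven \ract) \;=\; \sum_{\li\in\Lacti} \frac{\muT(\act@\li)}{\muT(\ract)}\,\cdot\,\muT(\varphi@\li\Bigiven\rli),
\]
where $\Lacti$ denotes the (finite, since $T$ is finite) set of local states $\li$ at which $i$ performs $\act$ with positive probability. Each coefficient $w_\li := \muT(\act@\li)/\muT(\ract)$ is nonnegative, and because $\act$ is \emph{proper}, the events $\act@\li$ partition the event $\ract$; hence $\sum_{\li}\muT(\act@\li)=\muT(\ract)$ and so $\sum_{\li} w_\li = 1$. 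Thus the right-hand side is a genuine convex combination of the values $\muT(\varphi@\li\Bigiven\rli)$, each of which is, by definition of $\pBel_i$, exactly the value of $\pBel_i(\varphi)$ at any point where $i$ performs $\act$ in state $\li$.

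The derivation of this identity is precisely the partition-and-factor computation that underlies \Cref{thm:belief implies prob}: one partitions $\Ract$ according to the local state $\li$ at which $\act$ is performed, writes $\muT(\varphi@\act)=\sum_{\li}\muT([\varphi\wedge\act]@\li)$, and then invokes local-state independence to factor $\muT([\varphi\wedge\act]@\li\Bigiven\rli)=\muT(\varphi@\li\Bigiven\rli)\cdot\muT(\act@\li\Bigiven\rli)$. Multiplying each summand by $\muT(\rli)$ turns it into $\muT(\varphi@\li\Bigiven\rli)\cdot\muT(\act@\li)$, and dividing through by $\muT(\ract)$ yields the displayed average. Note that $\muT(\ract)>0$, since $\act$ is performed at least once in $T$ and every run has positive measure, so the weights are well defined. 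The sole use of the hypothesis is the factorization step, so I expect the work here to amount to re-deriving (or citing from the proof of \Cref{thm:belief implies prob}) this convex-combination form and verifying that the weights sum to one.

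With the identity in hand the conclusion is immediate. Since a convex combination of real numbers cannot exceed their maximum, the assumption $\muT(\varphi@\act\Bigiven\ract)\ge p$ forces $\max_{\li\in\Lacti}\muT(\varphi@\li\Bigiven\rli)\ge p$; let $\li^\ast$ attain this maximum. Because $\li^\ast\in\Lacti$, there is a run $r$ in which $i$ reaches $\li^\ast$ and performs $\act$ there, and taking $t$ with $\rit=\li^\ast$ gives a point $(r,t)$ at which $\act$ is performed and at which $\pBel_i(\varphi)=\muT(\varphi@\li^\ast\Bigiven\rli)\ge p$, so $(T,r,t)\sat\pBel_i(\varphi)\ge p$, as required. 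The final step is trivial; the main obstacle is simply the bookkeeping needed to pin down the weighted-average representation and to confirm, using that $\act$ is proper, that its coefficients constitute a probability distribution.
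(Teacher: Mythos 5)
Your proposal is correct and is essentially the paper's own argument: the paper proves the contrapositive by invoking \Cref{thm:mixedExpectationOfBeliefTrm} to write $\muT(\varphi@\act\Bigiven\racti)$ as the expected belief (a convex combination, there indexed by runs rather than by local states) and noting it would be $<p$ if every belief were, whereas you argue directly that the average being $\ge p$ forces some term to be $\ge p$. The underlying decomposition — partitioning $\Ract$ by the local state at which $\act$ is performed and factoring via local-state independence — is identical in both.
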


Although there must be some points at which $\pBel_i(\varphi)\ge p$ holds when~$i$ acts,
%On the other hand, %as we now show in \Cref{thm:no-bound}, 
there is no lower bound on the measure of runs in which the agent must hold such a strong belief when performing~$\act$.
%holds high belief regarding the property~$\varphi$ when it acts. 
It follows that a probabilistic requirement can be met %satisfied 
even when the agent's degree of belief in the condition~$\varphi$ rarely meets the threshold set by the probabilistic constraint. 
More formally: 

\begin{restatable}{theorem}{recallNoBound}\label{thm:no-bound}
For every $\vareps>0$ and every $0<p<1$, there exists a pps~$\hat{T}$, a proper action~$\act$ for~$i$ in~$\hat{T}$ and a fact~$\varphi$ which is local-state independent of $\alpha$ %over
in~$\hat{T}$  such that~$\mu_{\hat{T}}(\varphi@\act\Bigiven\,\racti)\,\ge p$, but $\mu_{\hat{T}}\big(\pBel_i(\varphi)@\act\ge p \Bigiven\, \racti \big)~\le~ \vareps$. 
\end{restatable}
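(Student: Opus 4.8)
The plan is to construct, for each $\varepsilon>0$ and $0<p<1$, a small pps $\hat T$ with a single agent $i$ who performs $\act$ in every run and who, just before acting, lands in one of two \emph{distinguishable} local states: a rare state $A$ in which $i$ is certain that $\varphi$ holds, and a common state $B$ in which $i$'s belief in $\varphi$ sits just below $p$. The design is guided by a simple arithmetic intuition: a rare burst of near-certainty can single-handedly pull the overall (average) probability $\mu_{\hat T}(\varphi@\act\gvn\racti)$ up to $p$, even while the threshold itself is met on vanishingly few runs. I would therefore pick the two belief values and their probabilities so that their weighted average is exactly $p$, yet the high-belief state carries measure at most $\varepsilon$.

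Concretely, I would take $\hat T$ with a single time-$0$ initial global state $g_0$ and a single three-way probabilistic branch performed by the environment at time~$0$, with outcomes of probability $q$, $(1-q)b$, and $(1-q)(1-b)$. The first outcome sends $i$ a signal placing its time-$1$ local state into $A$ and sets the environment's recorded flag to ``$\varphi$-true''; the other two both place $i$'s local state into $B$ (so $i$ cannot tell them apart), setting the flag to ``$\varphi$-true'' and ``$\varphi$-false'' respectively. The agent performs $\act$ at time~$1$ in both $A$ and $B$, after which the run ends. I would let $\varphi$ be the past-based fact ``the environment's flag reads $\varphi$-true''. Then $\act$ is proper (performed exactly once per run, and at least once in $\hat T$), and, because $\varphi$ is past-based, \Cref{lem:independence}(b) immediately gives that $\varphi$ is local-state independent of $\act$ in $\hat T$.

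A direct computation then gives $\mu_{\hat T}(\varphi@\act\gvn\racti)=q+(1-q)b$, while $\pBel_i(\varphi)=1$ at the point $i$ acts in state $A$ and $\pBel_i(\varphi)=b$ at the point it acts in state $B$. Setting $q+(1-q)b=p$ fixes
\[
  q=\frac{p-b}{1-b},
\]
so that $\mu_{\hat T}(\varphi@\act\gvn\racti)\ge p$ holds (with equality). Since the threshold $\pBel_i(\varphi)\ge p$ is met only in state $A$, we obtain $\mu_{\hat T}(\pBel_i(\varphi)@\act\ge p\gvn\racti)=q$. Writing $b=p-\delta$ gives $q=\delta/(1-p+\delta)$, which tends to $0$ as $\delta\to 0^{+}$, so choosing $\delta$ small enough forces $q\le\varepsilon$, as required.

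The main obstacle I anticipate is bookkeeping rather than conceptual. One must verify that the chosen parameters define a legal pps, i.e.\ that $q$, $(1-q)b$, and $(1-q)(1-b)$ are all strictly positive (the definition requires $\pi$ to take values in $(0,1]$), which forces $b\in(0,1)$ and hence $\delta<p$; and one must confirm $b<p$ so that the threshold genuinely fails in $B$. The only degenerate case is $\varepsilon\ge 1$, where the bound $\le\varepsilon$ is automatic and any pps satisfying the constraint suffices. The remaining care is to check that $\varphi$ is genuinely past-based at the acting point: the flag is fixed by the time-$0$ branch and read at time~$1$, so its truth depends only on the prefix up to time~$1$, which is precisely what lets \Cref{lem:independence} supply local-state independence.
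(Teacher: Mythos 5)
Your proposal is correct and is essentially the paper's own construction: a rare local state in which the agent is certain of $\varphi$ (measure tending to $0$) averaged against a common local state with belief just below $p$, with local-state independence supplied by \Cref{lem:independence} (the paper invokes case~(a), determinism of $\act$, where you invoke case~(b), but both apply to both constructions). The paper packages the same arithmetic as a two-agent message-passing scenario with parameters $1-p$, $p-\vareps$, $\vareps$, which under the substitution $q=\vareps$, $b=(p-\vareps)/(1-\vareps)$ is exactly your tree.
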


\iffalse
\begin{restatable}{theorem}{recallSometimesButNobound}
\label{thm:sometimes-but-nobound}
$~$\\[-3.5ex]
\begin{enumerate}[(i)]
    \item 
    \label{item:sometimes}
    Under the conditions of \Cref{thm:mixedExpectationOfBeliefTrm}, if \mbox{$\muT(\varphi@\act\Bigiven\,\racti)\ge p$}, then there must be at least one point $(r,t)$ of~$T$ at which $\act$ is performed and $(T,r,t)\sat \,\,\pBel_i(\varphi)\ge p$, while 
    \item
    \label{item:no-bound}
        For every $\vareps>0$ and $0<p<1$, there exists a pps~$\hat{T}$, a proper action~$\act$ for~$i$ in~$\hat{T}$ and a fact~$\varphi$ over~$\hat{T}$  such that~$\mu_{\hat{T}}(\varphi@\act\Bigiven\,\racti)\,\ge p$, but $\mu_{\hat{T}}\big(\pBel_i(\varphi)@\act\ge p \Bigiven\, \racti \big)~\le~ \vareps$. 
\end{enumerate} 
\end{restatable}
\fi

The proof of this theorem is obtained by presenting a construction of a pps~$\hat{T}$ in which~$\act$ is performed with beliefs that are slightly below the threshold~$p$ in most cases (i.e., whp), and on a small measure of the runs it is performed when that agent's belief ascribes probability~1 to the condition~$\varphi$. 

%We can now show: 
%\recallNoBound*
%
\begin{proof}
\begin{figure}[t]
\centering
%\begin{center}
%\includegraphics[width=7in]{Doc2.pdf}
%\includegraphics[width=6in]{figs/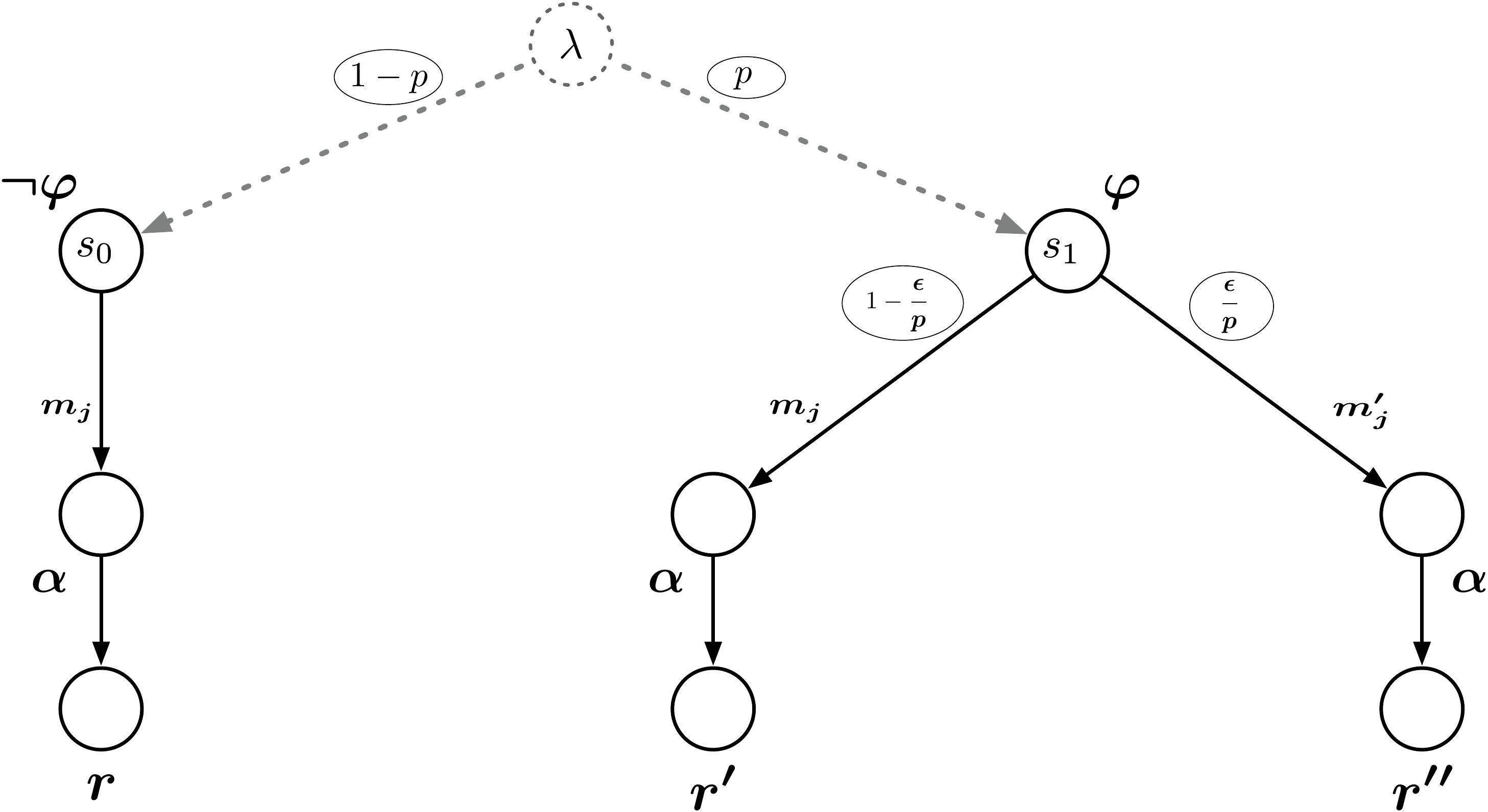}
\includegraphics[width=6in]{%figs/
NoBound-4.pdf}
%\includegraphics[width=7in]{../PODC/Doc2.pdf}
%\end{center}
\caption{The pps $\hat{T}$ described in the proof of \Cref{thm:no-bound}}
\label{fig:Doc1}
\end{figure}

It suffices to prove the claim under the assumption that $0<\vareps<p<1$. So fix such~$p$ and~$\vareps$, and let~$\hat{T}$ be the pps corresponding to the following system, depicted in \Cref{fig:Doc1}:
There are two agents, called~$i$ and~$j$. Agent~$j$'s local state contains a binary value called~`$\bit$' that does not change over time. 
There are two initial global states $s_0$ and~$s_1$, with  $\bit=0$ in the state~$s_0$ and  $\bit=1$ in~$s_1$. Assume that the initial state~$s_1$ is chosen with probability~$p$, while~$s_0$ is chosen with probability~$1-p$. 
In the first round, agent~$j$ acts as follows.  If $\bit=0$ then~$j$ sends the message~$m_j$ to~$i$. If $\bit=1$ then ~$j$'s move is probabilistic; it sends~$i$ the message~$m_j$ with probability $1-\frac{\vareps}{p}$, and it sends~$i$ a message $m_j'\ne m_j$ with probability~$\frac{\vareps}{p}$.  Agent~$i$ receives~$j$'s message at the end of the first round, and then unconditionally performs~$\act$ at time~1.
We denote by~$r$ the run in which~$\bit=0$. Moreover, let~$r'$ denote the run in which~$\bit=1$ and~$j$ sends the message~$m_j$, while~$r''$ is the run with $\bit=1$ in which the message~$m_j'$ is sent. 

Let~$\varphi$ denote the fact~``$\bit=1$.''
The action $\alpha$ is a deterministic action in~$T$ since agent~$i$ performs it unconditionally at time 1. Hence, from Lemma~\ref{lem:independence} it follows that $\varphi$ is local-state independent of~$\alpha$ in~$T$.
Recall that the value of~$\bit$ does not change over time, and so~$\varphi$ is a fact about runs. Since, by definition, $\muT(\varphi)=\muT(\mbox{``}\bit=1\mbox{''})=p$ and since~$i$ performs~$\act$ in all runs of~$\hat{T}$, we clearly have that 
$\mu_{\hat{T}}(\varphi@\act\Bigiven\,\racti)\,= p$.
Agent~$i$ receives the same message in~$r$ and in~$r'$, and so it has the same local state when it performs~$\act$
%at time~1 
in both~$r$ and~$r'$. 
By definition, it follows that $$\big(\pBel_i(\varphi)@\act\big)[r]~~=~~ \big(\pBel_i(\varphi)@\act\big)[r']~~=~~ \frac{\mu_{\hat{T}}(r')}{\mu_{\hat{T}}(\{r,r'\})}~~%= $$
%$$\frac{p-\vareps}{1-p+p-\vareps}
~~=~~
\frac{p-\vareps}{1-p+p-\vareps}
~~=~~
\frac{p-\vareps}{1-\vareps}~~~.$$
By assumption $0<\vareps<p<1$, which implies that~ $\frac{p-\vareps}{1-\vareps}<p$. 
 Since~$i$ receives the message~$m_j'$ only in the run~$r''$, in which \mbox{$\bit=1$}, we have that  
 $\big(\pBel_i(\varphi)@\act\big)[r'']=1$, and so~$r''$  is the only run in~$\hat{T}$ for which~$\pBel_i(\varphi)@\act \ge p$.
We thus obtain that 
$\mu_{\hat{T}}(\varphi@\act \Bigiven\,\racti)=\mu_{\hat{T}}(\{r',r''\})=p$,
%$$\text{ while }\quad 
$~~$ while $~~\mu_{\hat{T}}(\pBel_i(\varphi)@\act\ge p \Bigiven\,\racti)=\muT(r'')=\vareps~.$
The claim follows.
\end{proof}

\section{Relating Probabilistic Constraints and Expected Beliefs}
%Probabilistic Constraints and Expected Beliefs}
%\section{PKoP}%for high success prob'
\label{sec:sec5label}
While Example~\ref{ex:1} showed that it is possible to meet a probabilistic constraint while sometimes acting when the agent's belief does meet the constraint's threshold, \Cref{thm:no-bound} shows that the threshold can be met on an arbitrarily small measure of runs. But the proof of \Cref{thm:no-bound} suggests that, intuitively, the degree of belief needs to meet the threshold  ``on average.'' In this section, we prove our main theorem, which formally captures this intuition. Essentially, we define the {\em expected value} of the agent's degree of belief in~$\varphi$  when it performs~$\act$, and prove that this expected degree must meet the threshold for a probabilistic constraint to be satisfied. 
We define the appropriate notion of expectation as follows
(cf.~\cite{HalpernUncertainty}):

\begin{definition}%[Conditional Expectation of Belief]
\label{BeliefGivenAFactExpectation}
Let $\act$ be a proper action for~$i$ in a pps~$T$. 
The expected degree of $i$'s belief regarding~$\varphi$ when it performs~$\act$, denoted by $\expectation_{\muT}\big(\pBel_i(\varphi)@\act \Bigiven\, \racti\big)$, is: 
$$\expectation_{\muT}\big(\pBel_i(\varphi)@\act \Bigiven\, \racti\big) \quad \eqdef \quad
\mathlarger{\mathlarger{\sum}}_{r\in \RT}{\Big[\muT(r\Bigiven\,\,{\racti})\cdot \big(\pBel_i(\varphi)@\act\big)[r]\Big]} \quad.$$
\end{definition}

%Many algorithms are probabilistic, and in cases of interest the action~$\act$ may often result from~$i$ performing a probabilistic (i.e., mixed) action. In this case~$\act$ is not a \red{deterministic} action for~$i$. 
The expected degree of belief is precisely the expected value of the random variable $\pBel_i(\varphi)@\act$, conditioned on the fact that~$\act$ is performed at some point in the run. 

Our goal is to show that $\muT(\varphi@\act\Bigiven\,\racti)~=~\expectation_{\muT}
   \big(\pBel_i(\varphi)@\act\Bigiven\,\racti\big)$ holds for every proper action~$\act$. 
   As in the case of \Cref{thm:belief implies prob},
   %in    \Cref{sec:sec4label}, 
   the claim is not true in general. Again, the issue has to do with mixed actions. For a case in which the claim fails, consider again the system~$T$ depicted in \Cref{fig:dependence} and described in~\Cref{sec:sec4label}.
In this case, however, take the fact of interest to be $\varphi=\acti$. 
It is easy to check that $\muT(\varphi@\act\Bigiven\,\racti)=1$, since~$\varphi$ holds by definition whenever~$\act$ is performed. 
As far as~$i$'s beliefs are concerned, $\expectation_{\muT}\big(\pBel_i(\varphi)@\act\Bigiven\,\racti\bigr)=\frac{1}{2}$, since~$i$'s local state at the initial global state~$g_0$ guarantees that~$\act$ will be performed with probability~$\frac{1}{2}$.
Hence, $\muT(\varphi@\act\Bigiven\,\racti)~\boldsymbol{\ne}~\expectation_{\muT}
   \big(\pBel_i(\varphi)@\act\Bigiven\,\racti\big)$ in this example.
   The source of the problem, as before, is the dependence between~$\varphi$ and~$\acti$. Fortunately, local-state independence is, again, all that we need in order to restore order. We can now show:
 
%We call~$\act\in\Acti$  a \textbf{\em proper mixed action} for~$i$ in a pps~$T$ if it is a (probabilistic) action for~$i$ in~$T$, it is performed at least once by~$i$ in~$T$, and~$i$ performs~$\act$ at most once in any run of~$\RT$.
%We show that~\Cref{thm:expectationOfBelief} can be extended to support proper mixed actions as well.
%The case of proper mixed actions is more subtle, thus we introduce the following notation:

% ==add text here===.\\
% We now prove a simple connection between the correlation of a proper action to a given condition and the degree of belief that an agent performing such an action must have. 
% \\ ==modify text above
\begin{restatable}{theorem}{recallMixedExpectationBelief}\label{thm:mixedExpectationOfBeliefTrm}\label{thm:Exp}
Let $\act$ be a proper action for agent~$i$ in a pps~$T$. If~$\varphi$  is local-state independent of~$\act$ in~$T$, then
\begin{equation}\label{eq:mixedExpectationOfBeliefTrm}
   \muT(\varphi@\act\Bigiven\,\racti)~=~\expectation_{\muT}
   \big(\pBel_i(\varphi)@\act\Bigiven\,\racti\big) ~~.
\end{equation}
\end{restatable}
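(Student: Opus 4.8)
The plan is to reduce both sides of~\eqref{eq:mixedExpectationOfBeliefTrm} to the same weighted sum over the local states at which~$i$ performs~$\act$, mirroring the partition used in the proof of \Cref{thm:belief implies prob}. Since~$\act$ is proper, every run $r\in\Ract$ performs~$\act$ exactly once, at a unique local state; let~$\mathcal{L}$ be the finite set of local states~$\li$ at which~$i$ performs~$\act$ with positive probability in~$T$. Then the events~$\act@\li$ (runs in which~$i$ reaches~$\li$ and performs~$\act$ there), for $\li\in\mathcal{L}$, are pairwise disjoint and partition~$\Ract$. The key observation is that~$i$'s belief~$\pBel_i(\varphi)$ at the point where it performs~$\act$ depends only on its local state there: for every run $r\in\act@\li$ we have $\big(\pBel_i(\varphi)@\act\big)[r]=\muT(\varphi@\li\gvn\rli)$, a quantity I will abbreviate~$b_{\li}$.

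First I would evaluate the right-hand side. Dropping the runs outside~$\Ract$ (for which the belief random variable vanishes by convention) and grouping the remaining runs by the cell~$\act@\li$ containing them, the constancy of the belief on each cell gives
\[
  \expectation_{\muT}\big(\pBel_i(\varphi)@\act\gvn\racti\big)
  ~=~\sum_{\li\in\mathcal{L}} b_{\li}\cdot\muT(\act@\li\gvn\racti)
  ~=~\frac{1}{\muT(\racti)}\sum_{\li\in\mathcal{L}} b_{\li}\cdot\muT(\act@\li),
\]
where the last step uses $\act@\li\subseteq\racti$. This direction is essentially bookkeeping and needs no independence assumption.

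Next I would evaluate the left-hand side. Since $\varphi@\act$ (as a fact about runs) entails~$\racti$, we have $\muT(\varphi@\act\gvn\racti)=\muT(\varphi@\act)/\muT(\racti)$, and partitioning $\varphi@\act$ along the same cells gives $\muT(\varphi@\act)=\sum_{\li\in\mathcal{L}}\muT\big([\varphi\wedge\act]@\li\big)$. The crux is to rewrite each summand: conditioning on~$\rli$ and invoking the local-state independence of~$\varphi$ from~$\act$ (\Cref{def:local-state indep}),
\[
  \muT\big([\varphi\wedge\act]@\li\big)
  =\muT\big([\varphi\wedge\act]@\li\gvn\rli\big)\cdot\muT(\rli)
  =\muT(\varphi@\li\gvn\rli)\cdot\muT(\act@\li\gvn\rli)\cdot\muT(\rli)
  =b_{\li}\cdot\muT(\act@\li).
\]
Summing over~$\li\in\mathcal{L}$ and dividing by~$\muT(\racti)$ yields exactly the expression obtained for the right-hand side, establishing the equality.

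I expect the single delicate step to be the middle equality in the last display, where local-state independence converts the joint measure $\muT\big([\varphi\wedge\act]@\li\big)$ into the product $b_{\li}\cdot\muT(\act@\li)$; this is precisely the place where the counterexample of \Cref{fig:dependence} (with $\varphi=\acti$) breaks down, since there~$\varphi$ and~$\act$ are not local-state independent. The rest of the argument is the law of total probability over the partition~$\{\act@\li\}_{\li\in\mathcal{L}}$, and is conceptually the prior-versus-posterior (Jeffrey conditionalization) identity specialized to the event~$\varphi@\act$. I would also note that, unlike \Cref{thm:belief implies prob}, no appeal to a threshold hypothesis such as $\pBel_i(\varphi)\ge p$ is needed: the identity holds for the raw values~$b_{\li}$ themselves, which is exactly what upgrades the inequality to an exact equality.
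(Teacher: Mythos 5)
Your proposal is correct and follows essentially the same route as the paper's proof: partition $\Ract$ by the local state at which~$\act$ is performed, use the constancy of the posterior belief $\muT(\varphi@\li\gvn\rli)$ on each cell, and invoke local-state independence exactly once to convert $\muT([\varphi\wedge\act]@\li)$ into the product $\muT(\varphi@\li\gvn\rli)\cdot\muT(\act@\li)$. The only cosmetic difference is that you evaluate both sides separately and meet in the middle, whereas the paper transforms the expectation step by step into $\muT(\varphi@\act\gvn\racti)$.
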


%\begin{proof}[Proof sketch]
%\red{In order to prove the theorem we begin by forming a partition over the set of runs at which~$\act$ is performed. We proceed by utilizing the local-state independence property of~$\varphi$ and~$\alpha$, and lastly, we form a partition over the set of runs at which~$\varphi$ holds when~$\act$ is being performed.}
%%The full proof is somewhat cumbersome, and thus appears in the appendix section.
%\end{proof}

In a precise sense, \Cref{thm:mixedExpectationOfBeliefTrm} provides us with a probabilistic analogue of the knowledge of preconditions principle. The theorem directly implies, in particular, that in order for a system to satisfy a probabilistic constraint of the form $\muT(\varphi@\act\Bigiven\,\racti)\ge p$, the expected probabilistic belief in~$\varphi$ that agent~$i$ should have when it performs~$\act$ must be at least~$p$. 
This is, in fact, a \defemph{necessary and sufficient} condition on beliefs for satisfying a probabilistic constraint. 

\subsection{Jeffrey Conditionalization}
\label{sec:Jeff}
\Cref{thm:mixedExpectationOfBeliefTrm} captures the essence of the connection between probabilistic constraints and probabilistic beliefs in a purely probabilistic system. 
Essentially all of our analysis, except for  \Cref{thm:no-bound}, follows from this result. 
The probabilistic underpinnings of our results are based on well established connections between prior and posterior probabilities, since the natural notion of probabilistic beliefs that corresponds 
to distributed protocols in a pps~$T$ is, as defined in \Cref{sec:Beliefs in Sync sys}, in terms of the posterior probability obtained by conditioning the prior probability $\muT$  
induced by~$T$ on the agent's local state.
Indeed, the proof of  \Cref{thm:mixedExpectationOfBeliefTrm} 
is essentially based on a variant of Jeffrey conditionalization~\cite{Jeffrey}, making use of the properness of the action, and local-state independence of the condition. 
While a detailed proof is given in the Appendix, we now briefly discuss some of the elements underlying the proof.%
\footnote{We thank Joe Halpern for suggesting this view of our analysis.} 

A basic theorem commonly referred to as {\it Jeffrey conditionalization} or the {\it law of total probability}, states roughly that if events $X_1,\ldots,X_n$ form a partition of a state space~$S$, and~$E$ is an event over~$S$, 
 then \[Pr(E)~~=~~Pr(X_1)\!\cdot\! Pr(E\gvn X_1)~+~\cdots~+~Pr(X_n)\!\cdot\! Pr(E\gvn X_n)~.\]
One standard interpretation of this is that $Pr(E)$ is the prior
probability of~$E$ when an experiment is performed, and $X_1,\ldots,X_n$ are its
possible outcomes.  Then $Pr(E|X_i)$ is the posterior probability of $E$
conditional on observing outcome $X_i$.  This is precisely what is used in Monderer and Samet's result  that we quoted  from~\cite{SametMonderer1989} (which is not the main contribution of their paper).

A slight generalization of the above property states  that if~$Y$ is another arbitrary event, then
\[Pr(E | Y) ~~=~~ Pr(X_1|Y)\cdot Pr(E|X_1 \cap Y) ~+~ \cdots~+~ Pr(X_n|Y)\cdot Pr(E| X_n \cap Y)~.\]
This is the mathematical property underlying \Cref{thm:Exp}. The events~$X_i$ correspond to the 
sets of runs in which the action~$\act$ is performed at a particular local state $\ell_i$. 
The event~$Y$ here corresponds to~$\ract$ --- the set of runs in which $\act$ is performed. 

\section{Probable Approximate Knowledge} \label{sec:sec6label}
\Cref{thm:mixedExpectationOfBeliefTrm} formally captures the essential connection between beliefs and actions in probabilistic systems  at which probabilistic constraints are satisfied. 
In particular, it induces a tradeoff between the degree of belief an agent holds regarding~$\varphi$ when it acts, and the probability that it holds such strong belief.
As a corollary of \Cref{thm:mixedExpectationOfBeliefTrm} we can show 

\begin{restatable}{theorem}{recallPSK}\label{thm:mixedExpectationOfBelief}\label{thm:PSK}
Let $\act$ be a proper action for agent~$i$ in a pps~$T$, and let~$\varphi$ be local-state independent of~$\act$ in~$T$. For all $\delta,\vareps\in (0,1)$,\hspace{3mm} 
if\hspace{2mm} 
$\muT(\varphi@\act\Bigiven\,\racti)~\ge~ 1-\delta\vareps$
%\newline
\hspace{3mm}then\hspace{3mm}
$\muT\Big(\pBel_i(\varphi)@\act\,\ge\, 1-\vareps~ \Bigiven\, \racti \Big)~\ge~ 1-\delta$.
\end{restatable}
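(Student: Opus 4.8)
The plan is to derive this theorem as a direct corollary of \Cref{thm:mixedExpectationOfBeliefTrm} together with Markov's inequality. First I would invoke the main theorem: since $\varphi$ is local-state independent of $\act$, we have $\muT(\varphi@\act\Bigiven\,\racti)=\expectation_{\muT}\big(\pBel_i(\varphi)@\act\Bigiven\,\racti\big)$. Combining this with the hypothesis $\muT(\varphi@\act\Bigiven\,\racti)\ge 1-\delta\vareps$ gives the single inequality $\expectation_{\muT}\big(\pBel_i(\varphi)@\act\Bigiven\,\racti\big)\ge 1-\delta\vareps$, which is all we will use. The remaining task is purely probabilistic: to lower bound the conditional probability that a $[0,1]$-valued random variable is large, given a lower bound on its conditional expectation.

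To carry this out, I would work in the conditional probability space obtained from $\muT$ by conditioning on $\racti$, and introduce the random variable $X\eqdef\pBel_i(\varphi)@\act$. Because each value $\big(\pBel_i(\varphi)@\act\big)[r]$ is, by definition of $\pBel_i$, a conditional probability $\muT(\varphi@\li\Bigiven\,\rli)$, the variable $X$ takes values in $[0,1]$; hence $Y\eqdef 1-X$ is nonnegative and $[0,1]$-valued, with conditional expectation $\expectation_{\muT}(Y\Bigiven\,\racti)=1-\expectation_{\muT}(X\Bigiven\,\racti)\le \delta\vareps$. Applying Markov's inequality to $Y$ in this conditional space (here $\vareps>0$ is used) yields $\muT(Y\ge\vareps\Bigiven\,\racti)\le \expectation_{\muT}(Y\Bigiven\,\racti)/\vareps\le \delta$. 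The complement of the event $\{\pBel_i(\varphi)@\act\ge 1-\vareps\}=\{X\ge 1-\vareps\}$ is $\{Y>\vareps\}\subseteq\{Y\ge\vareps\}$, so its conditional measure is at most $\delta$, and taking complements gives $\muT\big(\pBel_i(\varphi)@\act\ge 1-\vareps\Bigiven\,\racti\big)\ge 1-\delta$, as required.

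The argument has no serious obstacle; its entire content is supplied by \Cref{thm:mixedExpectationOfBeliefTrm}, which furnishes the crucial equality between the probabilistic constraint and the expected belief. Two minor points deserve a remark. First, the convention that $\big(\pBel_i(\varphi)@\act\big)[r]=0$ on runs where $i$ never performs $\act$ is harmless, since such runs carry zero mass under $\muT(\cdot\Bigiven\,\racti)$ and therefore do not affect either the conditional expectation or the Markov estimate. Second, one should keep the two scales $\vareps$ and $\delta\vareps$ straight: the hypothesis controls the expectation at level $1-\delta\vareps$, while the conclusion asserts beliefs exceeding the weaker threshold $1-\vareps$; this slack of a factor $\vareps$ is exactly what Markov's inequality consumes, and is what makes the bound an \emph{approximate} rather than exact knowledge guarantee.
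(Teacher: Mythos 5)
Your proposal is correct and is essentially the paper's own argument: the paper also reduces everything to \Cref{thm:mixedExpectationOfBeliefTrm} and then carries out the Markov-inequality step by hand, partitioning $\Ract$ into the runs where the belief is below $1-\vareps$ and those where it is at least $1-\vareps$, bounding the conditional expectation above by $1-\vareps\delta'$, and concluding by contrapositive. Invoking Markov's inequality on $Y=1-\pBel_i(\varphi)@\act$ is just a cleaner packaging of the same computation.
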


%$$\mu_T(1-\beta_i(\phi)@\alpha \geq \eps | \under{\alpha}) \leq E[1-\beta_i(\phi)@\alpha | \under{\alpha}]/\eps$$

Informally, \Cref{thm:mixedExpectationOfBelief} can be read as stating that if a probabilistic constraint $\muT(\varphi@\act\Bigiven\,\racti)\ge p$  with threshold $p=1-\delta\vareps$ holds, then when the agent acts, she will probably 
(i.e., w.p.\ at least $1-\delta$) have a strong (i.e., $1-\vareps$) degree of belief that~$\varphi$ holds. 
An especially pleasing form of this result is obtained when we set $\delta=\vareps$:

\begin{restatable}{corollary}{recallPSB}\label{thm:PSB}
Let $\act$ be a proper action for agent~$i$ in a pps~$T$, and let~$\varphi$ be local-state independent of~$\act$ in~$T$.
For all $\vareps\ge 0$,
if \hspace{1.5mm} 
$\muT(\varphi@\act\Bigiven\,\racti)\ge1-\vareps^2$
%\newline
\hspace{1.5mm}then \hspace{1.5mm}
$\muT\Big(\pBel_i(\varphi)@\act\,\ge\, 1-\vareps~ \Bigiven\, \racti \Big)~\ge~ 1-\vareps.$
\end{restatable}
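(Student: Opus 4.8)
The plan is to obtain the Corollary as a direct specialization of \Cref{thm:PSK}, taking $\delta=\vareps$, after separating out the boundary values of $\vareps$. First I would split on the size of $\vareps$. When $\vareps\geq 1$ the target bound $1-\vareps$ is nonpositive, so the event $\pBel_i(\varphi)@\act\geq 1-\vareps$ holds in every run (a degree of belief, being a conditional probability, is nonnegative); its conditional probability is therefore $1\geq 1-\vareps$, and the conclusion holds vacuously, independently of the hypothesis. This reduces the problem to the range $0\leq\vareps<1$.

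For $\vareps\in(0,1)$ the argument is a pure instantiation. Setting $\delta=\vareps$ is legitimate, since then $\delta\in(0,1)$ as well, so \Cref{thm:PSK} applies. Its hypothesis $\muT(\varphi@\act\Bigiven\racti)\geq 1-\delta\vareps$ becomes $\geq 1-\vareps^2$, which is exactly the Corollary's hypothesis, and its conclusion $\muT(\pBel_i(\varphi)@\act\geq 1-\vareps\Bigiven\racti)\geq 1-\delta$ becomes $\geq 1-\vareps$, exactly the Corollary's conclusion. Nothing further is needed in this range.

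It remains to treat $\vareps=0$, which \Cref{thm:PSK} does not cover. Here the hypothesis reads $\muT(\varphi@\act\Bigiven\racti)\geq 1$, hence equals $1$, and the claim is that $\pBel_i(\varphi)@\act=1$ for conditional measure $1$ of the runs in which $\act$ is performed. For this I would invoke \Cref{thm:Exp}: the conditional expectation $\expectation_{\muT}\big(\pBel_i(\varphi)@\act\Bigiven\racti\big)$ equals $\muT(\varphi@\act\Bigiven\racti)=1$. Since $\pBel_i(\varphi)@\act$ is a random variable bounded above by $1$ whose conditional expectation is $1$, it must equal $1$ on conditional measure $1$ of the relevant runs; as $\RT$ is finite this gives $\muT(\pBel_i(\varphi)@\act\geq 1\Bigiven\racti)=1$, as required.

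I expect no obstacle of genuine substance here: all the probabilistic content already resides in \Cref{thm:PSK} (and, through it, in \Cref{thm:Exp}). The only care required is bookkeeping over the domain of $\vareps$, since \Cref{thm:PSK} is stated for $\delta,\vareps\in(0,1)$ whereas the Corollary ranges over all $\vareps\geq 0$; the endpoints are dispatched by triviality for $\vareps\geq 1$ and by the expectation identity for $\vareps=0$.
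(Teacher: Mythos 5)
Your proposal is correct and follows essentially the same route as the paper: dispatch $\vareps\ge 1$ by triviality, and for $0<\vareps<1$ instantiate \Cref{thm:PSK} with $\delta=\vareps$. The only (cosmetic) difference is at $\vareps=0$, where the paper cites a separate lemma (Lemma~\ref{lem:probability1}, proved by contraposition from the partition identity underlying \Cref{thm:Exp}) while you derive the same fact inline from \Cref{thm:Exp} via the standard observation that a $[0,1]$-bounded variable with conditional expectation $1$ equals $1$ almost surely; both arguments rest on the same expectation identity.
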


Recall that we originally asked to what (probabilistic) extent satisfying a probabilistic constraint with threshold~$p$ required the agent to have a degree of belief that meets the threshold when she acts, 
and discovered in \Cref{thm:no-bound} that the threshold can be met with arbitrarily small probability. This corollary provides a positive result, with a slightly relaxed threshold. 
It implies that in order to satisfy a constraint with threshold~$p$, the condition~$\varphi$ must be believed with degree at least $p'$ with probability at least $p'$ for a value of $\,p'=1-\sqrt{1-p}$. 

We view Corollary~\ref{thm:PSB} as showing  that in a system~$T$ that satisfies a probabilistic constraint with a threshold~$p$ that is sufficiently close to~1,  
the constraint's condition~$\varphi$ must be {\em probably approximately known}. 
Recall that the protocol~$\FS$ in Example~\ref{ex:1} satisfies that 
$\mu(\both\gvn\fireA)\ge 0.99$. 
Corollary~\ref{thm:PSB} implies that \defemph{in every protocol that satisfies this  constraint}, 
the probability that Alice's degree of belief that both are firing together when it decides to fire meets or exceeds $0.9$ is at least~$0.9$. 
In many distributed problems, the probabilistic constraints impose   a much higher threshold than~$0.99$. 
If, e.g., the threshold for~$\varphi$  is exponentially close to~1, then, with extremely high probability, the agent must have 
a very strong degree of belief in~$\varphi$  when it acts  (both  exponentially close to~1).   

%.\mu(\attackB\gvn\attackA)\ge .99$ that the protocol in

% (we denote this by PSB). 
%The PSB property is a slight weakening of the naive \pKoP\ requirement. 
%Hence, even though the \pKoP\ property fails in general in probabilistic systems, when the desired property~$\varphi$ is required to be sufficiently highly correlated with the action, the PSB property shows that a close variant of \pKoP\ does hold. 

\iffalse
    \marg{I think we can cut this and lemma 7 from here}When $\vareps=0$, \Cref{thm:PSB} states that in the setting of \Cref{thm:mixedExpectationOfBeliefTrm}, if~$\varphi$ must hold when~$i$ performs an action~$\act$, then whenever~$i$ acts it must {\em know} that~$\varphi$ currently holds.
    %reduces to \KoP\ (stated in probabilistic terms) 
    This claim is stated formally in the following lemma, and is essentially the claim made by \KoP:
    \begin{restatable}{lemma}{recallProbabilityOne}\label{lem:probability1}
    %Let~$T$ be a pps such that 
    If ~$\muT(\varphi@\act\Bigiven\,\racti)=1$~ then 
    ~$%\[
    \muT\Big(\pBel_i(\varphi)@\act~=~ 1\, \Bigiven\, \racti \Big)~=~ 1.$
    %\]
    \end{restatable}
\fi

\section{Discussion} \label{sec:discussion}
We have characterized the properties that an agent's probabilistic beliefs must satisfy when it acts, in order for its behavior to satisfy a probabilistic constraint that 
requires a given condition to hold whp when the agent performs a given action. Our results are not limited to protocols that make explicit reference to the agent's beliefs. 
They apply to all protocols, deterministic and probabilistic, and to arbitrary probabilistic constraints (subject to actions being proper and conditions being local-state 
independent of the actions). In a precise sense, \Cref{thm:Exp} is the probabilistic analogue of the knowledge of preconditions principle, which characterizes a fundamental 
connection between knowledge and action in distributed systems~\cite{KoP-paper}. Just as the KoP has proven useful in the design an analysis of optimal distributed protocols, 
we expect that \Cref{thm:Exp} and its future extensions to provide insights and assist in the design of efficient probabilistic protocols. For a simple example of such an insight, 
observe that \Cref{thm:Exp} implies that whenever an agent acts while having a low degree of belief in the desired condition of a probabilistic constraint, 
she reduces the probability of success. By refraining from doing so, she can improve her performance. 
Thus, for example, even though Alice's actions guarantee success with probability $0.99$ in the $\FS$ protocol, she can satisfy an even more stringent requirement by 
avoiding to fire when she receives a $\NO$ message from Bob. The probability that both fire, given that Alice fires, goes up to $0.99899$. 
Moreover, if an agent never acts when her degree of belief is below the threshold, 
%In a similar manner, 
\Cref{thm:Exp} can be used to establish that an agent's actions are optimal with respect to satisfying a probabilistic constraint, given her information.
%the signals that the agent receives. 

%
\bibliographystyle{plain}
\bibliography{PKoPBib}

\appendix

\section{%Proving the Expectation  Theorem 
Notations and Observations 
for the Proofs
}\label{AppendixProofs}

%Full proofs of all  formal statements in the paper are provided in the Arxiv version of the paper. This appendix is devoted to the proof of our main theorem. 
Sections of the Appendix are devoted to providing detailed proofs of all technical claims in the paper. 
We start by defining notation and stating several observations that will be used in various sections of the Appendix. 

\vspace{2mm}
Given an action $\act\in\Acti$ in a pps~$T$, we use $\Ract$ to denote event corresponding to the fact~$\ract$, i.e., the set of runs in which~$\act$ is performed. More formally, 
\[\Ract\eqdef\{r\in\RT|\,r\sat\racti\}~~.\]
%be the set of runs of~$T$ in which~$i$ performs the action~$\act$. 
%
Several proofs will partition the event~$\Ract$, according to the local states at which~$i$ performs the actions. To this end, we will use $\Li[\act]$ to denote the set of local states at which~$i$ ever performs~$\act$. That is, 
\[\Li[\act]~\eqdef~\{\ell_i=r_i(t)\,|\, (T,r,t)\models\acti\}~~.
\]
%
%Recall that we often want to consider whether and when \blue{transient} facts are true in a given run. 
%\\\red{===}\\
For ease of exposition we will use ${\act@\li}$ as shorthand for ${\acti@\li}$. 
%Observe that for each fact~$\psi$ we have that $(T,r) \models \psi@\li$ only if, in particular, $(T,r) \models\rli$ (that is, only if~$i$'s local state is~$\li$ at some point of~$r$. 
%\\\red{===}\\
By their definition, facts about runs such as $\varphi@\li$ and $\act@\li$ hold only in runs in which agent~$i$'s local state is~$\li$ at some point. 
Similarly, $(T,r)\models \act@\li$  only if, in particular, $ (T,r) \models\racti$ (that is, only if~$i$ performs~$\act$ in~$r$).
We will make use of the following trivial consequences of these observations to simplify expressions in the proofs.\\[-2ex]
%\iffalse
%\begin{claim}
\begin{lemma}\label{clm: claim7}
    For all~$T$, $r$, $\li$ and~$\act$, the following equivalences hold.  
   \begin{itemize}
   \item[(a)] $\quad(T,r)\models~~\big(\act@\li ~\Leftrightarrow~\act@\li\wedge\rli\big)$~,
\item[(b)] $\quad(T,r)\models~~\big( ~[\varphi\wedge\act]@\li ~\Leftrightarrow~~[\varphi\wedge\act]@\li\wedge\rli\big)$~,
\item[(c)] $\quad(T,r)\models~~\big( ~[\varphi\wedge \act]@\li ~\wedge~ \act@\li  ~\Leftrightarrow~~[\varphi\wedge \act]@\li \big)$~,
\item[(d)] $\quad (T,r)\models~~\big(\act@\li ~\Leftrightarrow~\act@\li\wedge\racti\big)$~, ~~and
\item[(e)] $\quad (T,r)\models~~\big(\varphi@\act ~\Leftrightarrow~\varphi@\act \wedge\racti\big)$~.
   \end{itemize}
\end{lemma}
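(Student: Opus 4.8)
The plan is to verify each of the five equivalences by directly unpacking the definition of the relevant ``$@$'' operator; no probabilistic reasoning is involved, since each claim is a purely logical consequence of how these facts about runs were defined. First I would observe that items (a), (b), (d), and (e) all share the shape $\chi \Leftrightarrow (\chi \wedge \xi)$, whereas (c) has the shape $(\chi \wedge \xi) \Leftrightarrow \chi$. In every case one direction is immediate by conjunction elimination (either dropping or reintroducing a conjunct), so the whole lemma reduces to establishing, for the appropriate $\chi$ and $\xi$, a single entailment $\chi \models \xi$.

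For (a) and (b) the added conjunct is $\rli$, and here I would invoke the definition of $\psi@\li$ directly: by that definition $(T,r)\models\psi@\li$ holds only if $(T,r)\models\rli$, i.e.\ the local state $\li$ occurs in $r$. Taking $\psi = \acti$ yields $\act@\li \models \rli$, which gives (a); taking $\psi = \varphi\wedge\acti$ yields $[\varphi\wedge\act]@\li \models \rli$, which gives (b). The converse direction of each is trivial, as $\rli$ is merely being reintroduced.

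For (d) and (e) the added conjunct is $\racti$. For (d), if $(T,r)\models\act@\li$ then $i$ performs $\act$ at the point $(r,t)$ where $\rit=\li$, so $(T,r,t)\models\acti$ for that $t$, and hence $(T,r)\models\racti$ by the definition of $\racti$; this establishes $\act@\li \models \racti$. For (e), the definition of $\varphi@\act$ itself explicitly lists $(T,r)\models\racti$ as one of its two conjuncts, so $\varphi@\act \models \racti$ is immediate. Both converse directions are again trivial conjunction eliminations.

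Finally, for (c) the substantive direction is $[\varphi\wedge\act]@\li \models \act@\li$: if $(T,r)\models[\varphi\wedge\act]@\li$, then at the point $(r,t)$ with $\rit=\li$ the conjunction $\varphi\wedge\acti$ holds, so in particular $(T,r,t)\models\acti$ holds there, which is exactly $(T,r)\models\act@\li$; the forward direction of (c) is plain conjunction elimination. I expect no genuine obstacle here — these are definitional unpackings — and the only care required is bookkeeping: keeping straight, in each of the five items, which conjunct is being added versus dropped, and invoking the precise clause of the definition of $@\li$ or $@\act$ that supplies the entailed conjunct.
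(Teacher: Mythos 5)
Your proof is correct and follows exactly the route the paper intends: the paper states these equivalences as ``trivial consequences'' of the observations that $\psi@\li$ entails $\rli$ and that $\act@\li$ and $\varphi@\act$ entail $\racti$, and your item-by-item definitional unpacking is precisely that argument made explicit. Nothing is missing.
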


The appendix section is organized as follows.
We prove \Cref{thm:belief implies prob} and Lemma~\ref{lem:independence} in \Cref{Appendix:SecB} and \Cref{Appendix:SecC}, respectively.
Our main result, stated in \Cref{thm:mixedExpectationOfBeliefTrm}, is proved in \Cref{Appendix:SecD}, and is used to prove the later theorems and lemmas.
Lemma~\ref{lem:sometimes} is proved in \Cref{Appendix:SecE}.
Finally, \Cref{thm:PSK} and Corollary~\ref{thm:PSB} are proved in \Cref{Appendix:SecF}.

\setcounter{equation}{0}
\section{Proving Theorem~4.2}%\Cref{thm:belief implies prob}}
%\section{Sufficient Conditions on Beliefs}
%\section{\nameref{sec:sec4label}- Proofs}
%\subsection{Proving \Cref{thm:belief implies prob}}
\label{Appendix:SecB}

Roughly speaking,  \Cref{thm:belief implies prob}
shows that if a protocol ensures that actions be taken only when belief meets the threshold of a probabilistic constraint, then the probabilistic constraint will be satisfied. This is true, however, only if the constraint's condition and its action satisfy local-state independence. Before proving the theorem, we prove a lemma showing that local-state independence guarantees that
 roughly speaking, the probability that~$\varphi$ holds at a point at which $i$'s local state is in~$\Li[\act]$ is independent of whether~$\act$ is performed at that state. This is where the independence property is used in the proof of \Cref{thm:belief implies prob}. More formally:
\begin{lemma}\label{lem:indep statement}
Let $\act$ be a proper action for~$i$ in~$T$, and let $\varphi$ be local-state independent of~$\newacti$ in~$T$. Then for each $\li\in\Li[\act]$,
\begin{equation}\label{eq:phi at alpha equals belief}
    \muT(\varphi@\act\Bigiven\, \act@\li) \quad=\quad
\muT(\varphi@\li\Bigiven\, \rli)~.
\end{equation}
\end{lemma}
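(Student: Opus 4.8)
The plan is to unfold the left-hand side with the definition of conditional probability, rewrite its numerator as an event of the form $[\varphi\wedge\act]@\li$, factor everything through the conditioning event $\rli$, and then apply the local-state independence hypothesis verbatim. First I would write $\muT(\varphi@\act \given \act@\li) = \muT(\varphi@\act \wedge \act@\li)\,/\,\muT(\act@\li)$. This quotient is well-defined precisely because $\li\in\Li[\act]$: by definition of $\Li[\act]$ there is at least one run in which $i$ performs $\act$ while in state $\li$, so $\muT(\act@\li)>0$.

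The key structural step is to identify the numerator event. Since $\act$ is a \emph{proper} action it is performed at most once per run, and since the current time is recorded in $i$'s local state, the state $\li$ likewise occurs at most once per run. Hence, whenever $\act@\li$ holds, the unique point at which $\act$ is performed coincides with the unique point at which $i$ is in $\li$; at that shared point $\varphi@\act$ holds exactly when $\varphi$ holds, so I would argue the event equivalence $\varphi@\act \wedge \act@\li \Leftrightarrow [\varphi\wedge\act]@\li$ (the implication $[\varphi\wedge\act]@\li \Rightarrow \act@\li$ is exactly part (c) of \Cref{clm: claim7}, and the coincidence-of-points argument supplies both directions). This gives $\muT(\varphi@\act \wedge \act@\li) = \muT([\varphi\wedge\act]@\li)$.

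Next I would factor through $\rli$. Using parts (a) and (b) of \Cref{clm: claim7}, both $\act@\li$ and $[\varphi\wedge\act]@\li$ entail $\rli$, so $\muT(\act@\li) = \muT(\act@\li\given\rli)\cdot\muT(\rli)$ and $\muT([\varphi\wedge\act]@\li) = \muT([\varphi\wedge\act]@\li\given\rli)\cdot\muT(\rli)$, where $\muT(\rli)>0$ because $\li$ appears in $T$. The common factor $\muT(\rli)$ cancels, leaving $\muT(\varphi@\act\given\act@\li) = \muT([\varphi\wedge\act]@\li\given\rli)\,/\,\muT(\act@\li\given\rli)$. Applying \Cref{def:local-state indep} to rewrite the numerator as $\muT(\varphi@\li\given\rli)\cdot\muT(\act@\li\given\rli)$ and cancelling the factor $\muT(\act@\li\given\rli)$ (positive since $\li\in\Li[\act]$) yields exactly $\muT(\varphi@\li\given\rli)$, as required.

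The step I would be most careful about — and the one carrying the real content — is the event identity $\varphi@\act \wedge \act@\li \Leftrightarrow [\varphi\wedge\act]@\li$, since this is where \emph{properness} of $\act$ is indispensable: without it, $\act$ could be performed at $\li$ and also elsewhere in the same run, breaking the coincidence of the $\act$-point and the $\li$-point that the argument relies on. Properness is also what lets me restrict to $\li\in\Li[\act]$, guaranteeing that every conditioning event appearing in the computation has strictly positive measure so that all the cancellations above are legitimate.
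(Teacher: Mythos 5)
Your proof is correct and follows essentially the same route as the paper's: both reduce the left-hand side to $\muT([\varphi\wedge\act]@\li)\,/\,\muT(\act@\li)$, factor through $\rli$ via Lemma~\ref{clm: claim7}(a),(b), and then apply local-state independence and cancel. The only cosmetic difference is that where you justify the event identity $\varphi@\act\wedge\act@\li\Leftrightarrow[\varphi\wedge\act]@\li$ directly from properness (the $\act$-point and the $\li$-point must coincide), the paper reaches the same expression by partitioning the $\varphi@\act$-runs over the states of $\Li[\act]$ and using properness to annihilate every cell except the one for $\li$.
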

\begin{proof}
We first define a partition on the set of runs in which~$\varphi$ occurs when~$\act$ is performed. 
For every local state $\ell\in\Li[\act]$, we denote  $Q_{\varphi}^{\ell}=\{r|\; (T,r) \models ~[\varphi \wedge \act]@\ell ~\}$. 
Thus, $Q_{\varphi}^{\ell}$ is the set of runs in which both~$\varphi$ holds and~$\act$ is performed when~$i$'s local state is~$\ell$. 
Moreover, define $\Pi'=\{ Q_{\varphi}^{\ell}\,|\, \ell\in\Li[\act] \}$. 

Fix a local state $\li\in\Li[\act]$.
Clearly,
%\footnote{The proof can be done in a similar fashion to that of the partition~$\Pi$ mentioned in the proof of \Cref{thm:belief implies prob}.} 
$\Pi'$ partitions the set of runs satisfying $\varphi@\act$, and so the left-hand side of~\eqref{eq:phi at alpha equals belief} satisfies: 
\begin{equation}\label{eq:lsi1}
    \muT(\varphi@\act\Bigiven\, \act@\li) \quad=\quad
    \mathlarger{\mathlarger{\sum}}_{\lip\in \Li[\act] }{ \muT \big([\varphi\wedge \act]@\lip\Bigiven\, \act@\li \big)} ~.
\end{equation}
By assumption, $\act$ is a proper action, hence there exists no run~$r\in\RT$ for which both $(T,r)\models \act@\li$ and $(T,r)\models \act@\lip$ for $\lip\neq\li$. It follows that $\muT \big([\varphi\wedge \act]@\lip\Bigiven\, \act@\li \big)=0$ for each $\lip\neq\li$, and so the right-hand side of~\eqref{eq:lsi1} satisfies:
\begin{equation}\label{eq:lsi2}
    \mathlarger{\mathlarger{\sum}}_{\lip\in \Li[\act] }{ \muT \big([\varphi\wedge \act]@\lip\Bigiven\, \act@\li \big)} 
    \quad=\quad
    \muT \big([\varphi\wedge \act]@\li\Bigiven\, \act@\li \big) ~.
\end{equation}
From the definition of conditional probability we obtain that the right-hand side of~\eqref{eq:lsi2} satisfies:
\begin{equation}\label{eq:lsi3}
    \muT \big([\varphi\wedge \act]@\li\Bigiven\, \act@\li \big) 
    \quad=\quad
    \frac{\muT \big([\varphi\wedge \act]@\li ~\wedge~ \act@\li \big) }{\muT (\act@\li )} ~.
\end{equation}
From Lemma~\ref{clm: claim7}(c)
%ZZY .\ref{eq: phi_and_a@l and a@l} 
it follows that $\muT \big([\varphi\wedge \act]@\li ~\wedge~ \act@\li \big) ~=~ \muT \big([\varphi\wedge \act]@\li \big)$.
By multiplying the right-hand side of~\eqref{eq:lsi3} by~$\frac{\muT(\rli)}{\muT(\rli)}=1$, we obtain: 
\begin{equation}\label{eq:lsi4}
    \frac{\muT \big([\varphi\wedge \act]@\li ~\wedge~ \act@\li \big) }{\muT (\act@\li )}
    \quad=\quad
    \frac{\muT \big([\varphi\wedge \act]@\li \big) }{\muT (\rli)} \cdot \frac{\muT (\rli)}{\muT (\act@\li )} ~.
\end{equation}
From Lemma~\ref{clm: claim7}(b)
%ZZY.\ref{eq: phi_and_a@l and l}, 
Lemma~\ref{clm: claim7}(a)
%ZZY\ref{eq: a@l and l}, 
and by the definition of conditional probability we obtain  that the right-hand side of~\eqref{eq:lsi4} satisfies:
\begin{equation}\label{eq:lsi5}
    \frac{\muT \big([\varphi\wedge \act]@\li \big) }{\muT (\rli)} \cdot \frac{\muT (\rli)}{\muT (\act@\li )}
    \quad=\quad
    \frac{\muT \big([\varphi\wedge \act]@\li \Bigiven\,\rli \big) }{\muT (\act@\li \Bigiven\,\rli)} ~.
\end{equation}
By assumption, $\varphi$ is local-state independent of~$\act$ in~$T$, and so the right-hand side of~\eqref{eq:lsi5} satisfies:
\begin{equation}\label{eq:lsi6}
    \begin{aligned}
    \frac{\muT \big([\varphi\wedge \act]@\li \Bigiven\,\rli \big) }{\muT (\act@\li \Bigiven\,\rli)}
    %&
    \quad=\quad
    \frac{\muT (\varphi@\li \Bigiven\,\rli ) \cdot \muT (\act@\li \Bigiven\,\rli ) }{\muT (\act@\li \Bigiven\,\rli)}
    \quad=\quad %\\
   %\quad=\quad & 
   \muT (\varphi@\li \Bigiven\,\rli ) ~.
    \end{aligned}
\end{equation}

\end{proof}

We are now ready to prove: 
\recallBeliefImpliesProb*

\begin{proof}
Recall that $\Ract\eqdef\{r\in\RT|\,r\sat\racti\}$ and $\Li[\act]~\eqdef~\{\ell_i=r_i(t)\,|\, (T,r,t)\models\acti\}$.
Moreover, we use ${\act@\li}$ as shorthand for ${\acti@\li}$.
For every $\li\in\Li[\act]$, let $\Qli=\{r|\,(T,r) \!\models\act@\li\}$, and let $\Pi=\{ \Qli| \li\in\Li[\act] \}$.
%Denote $\Qli=\{r\in\RT\,|\, (T,r) \models\rli\}$ and define $\Pi=\{\Qli\,\Bigiven\,\, \li\in\Lacti\}$.
We claim that $\Pi$ is a partition of~$\Ract$. 
Every set $\Qli\in\Pi$ is nonempty since by the definition $\li\in\Li[\act]$ and so there is at least one run of~$\RT$ in which~$\li$ appears as a local state of~$i$. 
By definition, $\bigcup\limits_{\li\in\Li[\act]} \Qli=\Ract$ since for every $r\in\RT$ we have that  $r\in\Ract$ 
iff $i$ performs~$\act$ at some point in~$r$, which is true iff $i$ performs~$\act$ at a local state of~$\Li[\act]$ at some point $(r,t)$ of~$r$. 
Finally, the intersection of any two sets $\Qli$ and $Q^{\lip}$ for $\li,\lip \in \Li[\act]$ such that $\li\neq\lip$ is empty since, by assumption,  $\act$ is performed at most once in any run of $\RT$.
Thus, $\Pi$ is a partition of~$\Ract$. 

%Let $\li^{act}:\Ract\rightarrow\Lacti$ be a function that maps each run $r\in\Ract$ to the local state of $i$ when it performs $\act$ in~$r$. $\acti$ can hold at most once in a run, hence $\li^{act}$ is well defined for each $r\in\Ract$.
%Let the variable $cr$ denote the current run.
%For $\li\in \Lacti $ let $act@\li$ denote the fact about runs ``$\li^{act}(cr)=\li$", and recall that $\rli$ denotes the fact about runs ``$i$'s local state is $\li$ at some point of~$cr$".
%Note that since $\act$ is a \red{deterministic} action, $T\models [\rli \iff act@\li]$ for each $\li\in \Lacti$. 
%It is easy\footnote{\NitzCmnt{Can we say it???}} to see that the function~$\li^{act}$ naturally implies a partition over~$\Ract$.

\vspace{2.5mm}
Since~$\Pi=\{ \Qli| \li\in\Li[\act] \}$ is a partition of~$\Ract$, we can use the law of total probability to obtain:
\begin{equation}\label{eq:law of total probability}
    \muT(\varphi@\act\Bigiven\,\racti)\quad=\quad 
\mathlarger{\mathlarger{\sum}}_{\li\in \Li[\act] }{ \muT(\varphi@\act\Bigiven\, \act@\li \wedge \racti) \cdot \muT(\act@\li \Bigiven\, \racti) }.
\end{equation}
From Lemma~\ref{clm: claim7}(d)
%ZZY .\ref{eq: a@l and a} 
it follows that $$\muT(\varphi@\act\Bigiven\, \act@\li \wedge \racti)~~=~~\muT(\varphi@\act\Bigiven\, \act@\li)$$ for every $\li\in \Li[\act]$. It follows that the right-hand side of \eqref{eq:law of total probability} equals:
\begin{equation}\label{eq:rli iff (rli wedge racti)}
\mathlarger{\mathlarger{\sum}}_{\li\in \Li[\act] }{ \muT(\varphi@\act\Bigiven\, \act@\li) \cdot \muT(\act@\li \Bigiven\, \racti) } \quad.
%\quad=\quad 
%\mathlarger{\mathlarger{\sum}}_{\li\in \Lacti }{ \pBelli(\varphi@\li) \cdot \muT(\rli\Bigiven\, \racti) } 
\end{equation}
Fix $\li\in\Li[\act]$. By definition of $\Li[\act]$ there exists a point $(r,t)\in \Pts(T)$ such that $(T,r,t)\models \acti$, and $\li= \rit$. 
By assumption we have that $(T,r,t)\models \pBel_i(\varphi)\ge p$, 
which implies by definition of $\pBel_i$ that $\muT(\varphi@\li\given\rli) \ge p$. % for every $\li\in \Li[\act]$.
Moreover, by Lemma~\ref{lem:indep statement}, it follows that $\muT(\varphi@\act\Bigiven\, \act@\li) \ge p$. % for every $\li\in \Li[\act]$.
%Therefore, $\muT(\varphi@\act\Bigiven\,\rli) \ge p$ for every $\li\in \Li[\act]$.
Since this is true for every $\li\in\Li[\act]$, we thus obtain that:
$$  \eqref{eq:rli iff (rli wedge racti)}
\quad\ge \quad
p~\cdot \mathlarger{\mathlarger{\sum}}_{\li\in \Li[\act]}{ \muT(\act@\li \given \racti) }\quad=\quad p\cdot 1~~=~~p~.$$
It follows that $\muT(\varphi@\act\Bigiven\,\racti) \ge p$ as required.
\end{proof}

%Note that the proof can be immediately extended from $\act$ to any fact about points $\psi$ that holds at most once in every run of the system, holds at least at a single point in the system, and for which %$S\models [\psi \Rightarrow \Bip(\varphi)]$.
%$\beta_i(\varphi)\ge p$ at every point at which $\psi$ holds, that is:
%\begin{equation}
%    \text{If 
%    $\pBel_i(\varphi)\ge p$ at every point at which~$\psi$ holds, then
%   ~$\muT(\varphi@\psi\gvn \underline{\psi})\ge p$ .}
%\end{equation}

\iffalse
\subsection{Proof of \Cref{thm:prob does not imply belief} }\label{AppendixProof2}
We recall \Cref{thm:prob does not imply belief}:
\xxxNoBelief*

\begin{proof}
In protocol $\FS$, consider the fact $\varphi$ to be `both divisions are attacking', $i$ to be the first commander and~$\act$ being that $i$ attacks. %orders an attack. 
Clearly, $\muT(\varphi@\act\Bigiven\,\racti)\ge 0.99$. 
However, in a run in which none of the first division's messengers arrive, while the second division's messenger does arrive, the first commander knows that the second division is not attacking, and so $\pBel_i(\varphi)\ge{0.99}$ does not hold when he attacks. 
\end{proof}
\fi

\section{Proving Lemma~4.3}%\ref{lem:independence}}
\label{Appendix:SecC}
%\section{Properties of Local-State Independence}\label{AppendixPastBased}
%\begin{restatable}{lemma}{recallIndependence}\label{lem:independence}
We recall Lemma~\ref{lem:independence}: %\\[1ex]
\vspace{4mm}
\newline
\textbf{Lemma 4.3}\quad
{\it Let~$\act\in\Acti$ be a proper action in a pps~$T$, and let~$\varphi$ be a fact over~$T$. If (a) $\act$ is deterministic in~$T$, or 
(b) $\varphi$ is past-based in~$T$, then 
%Each of the following two conditions is sufficient for ensuring that
$\varphi$ is local-state  independent of~$\act$ in~$T$. }

%\recallIndependence*

% \begin{lemma}\label{lem:det actions indep}
% Let~$\act$ be a proper action for~$i$ in~$T$, and let~$\varphi$ be fact over~$T$. 
% If~$\act$ is a deterministic action, then~$\act$ and~$\varphi$ are local-state independent in~$T$.
%\begin{lemma}\label{lem:past-based independence}
%Let $\varphi$ be a past-based fact in a pps $T$. Then $\act$ and~$\varphi$ are local-state independent in~$T$ for each action~$\act\in\Acti$ of~$T$.
\begin{proof}
We Prove each condition separately:
\paragraph{Case (a): $\act$ is a deterministic action in~$T$.}
Fix~$\varphi$ and an action~$\act$ satsfying the assumptions, and a local state $\li\in \Li[\act]$. Since~$\act$ is a deterministic proper action for~$i$, the fact $\act@\li$ either holds for every run that satisfies~$\rli$ (i.e., for every run in which~$\li$ appears), 
or holds for none of them.
\begin{itemize}[leftmargin=.1in]
    \item[-] If~$\act@\li$ holds for every run that satisfies~$\rli$ then, for every $r\in\RT$ we have both that\newline 
$(T,r) \models ~\act@\li \Leftrightarrow \rli$, and that
$(T,r) \models ~\big([\varphi@\li \Leftrightarrow [\varphi\wedge\act]@\li\big)$.
We thus obtain that $\muT(\act@\li\Bigiven\rli)=1$, and that
$%[\varphi\wedge \act]@\li~ \Bigiven ~\rli \big) ~=~
\muT\big(\varphi@\li~ \Bigiven ~\rli\big) = 
\muT\big([\varphi\wedge \act]@\li~ \Bigiven ~\rli \big)$.
\item[-]
Otherwise, $\act@\li$ holds at none of the runs that satisfies~$\rli$. In this  case we have for all runs $r\in\RT$ that both 
$(T,r) \models \neg(\act@\li) ~\Leftrightarrow~ \rli$, and
$(T,r) \models \big(~ \neg\big(~[\varphi\wedge\act]@\li \big) ~\Leftrightarrow~ \varphi@\li ~\big)$.
We thus have that both
$\muT(\act@\li\Bigiven\rli)=0$, and
$\muT\big( ~[\varphi\wedge \act]@\li~ \Bigiven ~\rli \big) ~=~ 0$.
\end{itemize}
In either case
$\muT(\varphi@\li\Bigiven\,\rli) \cdot \muT(\act@\li\Bigiven\rli)~=~\muT\big([\varphi\wedge \act]@\li \Bigiven\rli \big)$, establishing local-state independence.
%It thus follows that $\act$ and~$\varphi$ are local-state independent in~$T$.

\paragraph{Case (b): $\varphi$ is a past-based fact in~$T$.}
Fix $\li\in \Li[\act]$, and let $p=\muT(\varphi@\li\Bigiven\,\rli)$, thus $p$ denotes the probability that $\varphi$ holds when $i$'s local state is $\li$ given that $i$'s local state is $\li$ at some point of the run.
Since~$\varphi$ is a past-based fact in~$T$, for each node~$v$ of~$T$ either~$\varphi$ holds at any point~$(r,t)$ such that $r$~passes through~$v$ at time~$t$ or~$\varphi$ does not hold at any such point.

Since~$i$'s protocol~$\Proti$ is a (possibly probabilistic) function of its local state, 
the probability that $i$~performs~$\act$ is the same at all points at which its local state is~$\li$. Hence, the conditional probability that $i$~performs~$\act$ when its local state is~$\li$ given that~$i$'s local state is~$\li$ at some point of the run, denoted $\muT(\act@\li\Bigiven\rli)$, is fixed by the protocol. 
%Hence, whenever $i$ is in the local state $\li$, it performs $\act$ with probability $\muT(\act@\li\Bigiven\rli)$.
%The action chosen by~\iInNEnv\ is independent of the action chosen by other agents or the environment, thus for each node $v$ of $T$ at which $i$'s local state is $\li$, the probability that $i$ performs $\act$ when in local state~$\li$ at a run~$r$ given that~$r$ passes through $v$ is $\muT(\act@\li\Bigiven\rli)$. 
%
It follows that 
%the probability measure assigned by $\muT$ to runs at which both $\varphi$ and $\acti$ holds at $\li$ given $\rli$, denoted by 
$\muT\big( ~[\varphi \wedge \act]@\li \Bigiven ~\rli \big)$ is the probability of reaching a node $v$ at which~$\varphi$ holds and then choosing a run in which~$i$ performs~$\act$ at $v$.
By the analysis above it equals $\muT(\varphi@\li\Bigiven\,\rli) \cdot \muT(\act@\li\Bigiven\rli)$, and
the claim follows.
\end{proof}

%yoram cut
%\section{Proving the Expectation  Theorem \iffalse
%6.2
%
%}%(\Cref{thm:mixedExpectationOfBeliefTrm})}
%\label{Appendix:SecD}
%%\nameref{sec:sec6label}- Proofs}
%%PKoP for Mixed Strategies}
%\label{MixedPKoP}
%\Cref{thm:mixedExpectationOfBeliefTrm}\label{Appendix:SecD}
%is our main theorem, and the proofs of other results, including Lemma~\ref{lem:sometimes}, easily follow as its corollaries. %This is why we prove it now, before proving Lemma~\ref{lem:sometimes}. 

%Recall that $\Ract\eqdef\{r\in\RT|\,r\sat\racti\}$ is the set of runs of~$T$ in which~$i$ performs the action~$\act$. 
%%at some point. 
%Since $\muT(r\given\,{\racti})=0$ for each run $r\notin \Ract$,  the expected degree of~$i$'s belief  in~$\varphi$ when it performs a proper action~$\act$ (given that~$i$ performs~$\act$) can be expressed by: 
%%as follows: 
%\begin{equation}
%\label{eq:Ract}
%\expectation_{\muT}(\pBel_i(\varphi)@\act\Bigiven\,\racti)~~~=~~~
%\mathlarger{\mathlarger{\sum}}_{r\in \Ract}{\Big[\muT(r\Bigiven\,\,{\racti})\cdot \big(\pBel_i(\varphi)@\act\big)[r]\Big]}~~~.
%\end{equation}
%

\section{Proving %\Cref{thm:mixedExpectationOfBeliefTrm}
 the Expectation  Theorem }\label{Appendix:SecD}
The expectation theorem 
%\Cref{thm:mixedExpectationOfBeliefTrm}\label{Appendix:SecD}
is our main result, and the proofs of other claims, including Lemma~\ref{lem:sometimes}, easily follow as its corollaries. %This is why we prove it now, before proving Lemma~\ref{lem:sometimes}. 
It is stated as follows: 
%We recall \Cref{thm:mixedExpectationOfBeliefTrm}:
\recallMixedExpectationBelief*

\begin{proof}
We will transform the right hand side of Equation~\eqref{eq:mixedExpectationOfBeliefTrm} into the left hand side. 
%Recall that $\Ract\eqdef\{r\in\RT|\,r\sat\racti\}$. 
%yoram  is the set of runs of~$T$ in which~$i$ performs the action~$\act$ at some point.
%Note that since $\muT(r\given\,{\racti})=0$ for each run $r\notin \Ract$, the expectation of belief that~$i$ has in~$\varphi$ when it performs a proper action~$\act$ given that~$i$ performs~$\act$ in a pps~$T$ is simply: 
%yoram We now begin with the right hand side of Equation~\eqref{eq:mixedExpectationOfBeliefTrm}. 
\iffalse
We first recall \Cref{eq:Ract}: 
\[
%\begin{equation}
%(6)\hspace{2cm}     
\expectation_{\muT}\big(\pBel_i(\varphi)@\act\given\racti\big)  %\quad \eqdef \quad
    %\mathlarger{\mathlarger{\sum}}_{r\in \RT}{\Big[\muT(r\Bigiven\,\,{\racti})\cdot \pBel_i(\varphi)@\act[r]\Big]} 
    ~~=~~~
    \mathlarger{\mathlarger{\sum}}_{r\in \Ract}{\Big[\muT(r\given\,{\racti})\cdot \big(\pBel_i(\varphi)@\act\big)[r]\Big]} \quad.   \]
    %\tag{\ref{eq:Ract}}
%\end{equation}
\fi
%
Recall that $\Ract$ is the set of runs of~$T$ in which~$i$ performs the action~$\act$. 
%at some point. 
Since $\muT(r\given\,{\racti})=0$ for each run $r\notin \Ract$,  the expected degree of~$i$'s belief  in~$\varphi$ when it performs a proper action~$\act$ (given that~$i$ performs~$\act$) can be expressed by: 
%as follows: 
\begin{equation}
\label{eq:Ract}
\expectation_{\muT}(\pBel_i(\varphi)@\act\Bigiven\,\racti)~~~=~~~
\mathlarger{\mathlarger{\sum}}_{r\in \Ract}{\Big[\muT(r\Bigiven\,\,{\racti})\cdot \big(\pBel_i(\varphi)@\act\big)[r]\Big]}~~~.
\end{equation}
%By definition of expectation,  we have that: 
%\begin{equation}\label{eq:Eq1.1}~~.
% \expectation_{\muT}(\pBel_i(\varphi)@\act\Bigiven\,\racti)~=
% \mathlarger{\mathlarger{\sum}}_{r\in \Ract}{\Big[\muT(r\given\,{\racti})\cdot \big(\pBel_i(\varphi)@\act\big)[r]\Big]}~. 
%\end{equation}
Recall that  $\Li[\act]~=~\{\ell_i= r_i(t)\,|\, (T,r,t)\models\acti\}$ for every $\act\in\Acti$, %denotes the set of local states in which~$i$ might perform~$\act$. %That is, for each~$\li'\in\Li[\act]$ there exists a point~$(r,t)\in\Pts(T)$ such that $\rit=\li'$, and $(T,r,t)\models\acti$.
and  $\Qli=\{r|\, (T,r) \!\models\act@\li\}$ for every $\li\in\Li[\act]$. Moreover, $\Pi=\{ \Qli\,|~ \li\in\Li[\act] \}$ is a partition of~$\Ract$.
The rhs of Equation~\eqref{eq:Ract} can thus be reformulated as: 
%, as shown in the proof of \Cref{thm:belief implies prob} (The partition is according to the local state of~$i$ at which it performs~$\act$). 
%The proof can be done in a similar fashion to that of the partition mentioned in \Cref{thm:belief implies prob}. % where we prove that $\Pi$ is a partition of~$\Ract$.
%yoram Thus, the right-hand side of~\eqref{eq:mixedExpectationOfBeliefTrm} equals:
%yoram $$ \mathlarger{\mathlarger{\sum}}_{r\in \Ract}{\Big[\muT(r\given\,{\racti})\cdot \big(\pBel_i(\varphi)@\act\big)[r]\Big]}\quad=\quad $$
 %\eqref{eq:Ract} \quad=\quad 
\begin{equation}\label{eq:Eq1.1}
 \mathlarger{\mathlarger{\sum}}_{\li\in \Li[\act]}{ ~~\mathlarger{\mathlarger{\sum}}_{\substack{r\in \Qli} }{ \Big[
\muT(r\given\racti)\cdot \big(\pBel_i(\varphi)@\act\big)[r] \Big] } } ~~.
\end{equation}
%Recall that 
Since~$\act$ is 
%assumed to be 
a proper action, and so if $(T,r) \models \act@\li$ for a local state~$\li\in\Li[\act]$, then $\big(\pBel_i(\varphi)@\act\big)[r]=\muT(\varphi@\li \Bigiven\,\rli)$.
Moreover, $\muT(\varphi@\li \Bigiven\,\rli)$ is guaranteed to be well defined since by definition of $\Li[\act]$, for each $\li\in\Li[\act]$ there must be a run $r$ such that $(T,r) \models\rli$, where $\muT$ assigns positive measure to every run of~$\RT$, and so $\muT(\rli)>0$ for each $\li\in\Li[\act]$.
We can therefore rewrite Equation~\eqref{eq:Eq1.1} to obtain:
\begin{equation}\label{eq:ExpSplit}
   \mathlarger{\mathlarger{\sum}}_{\li\in \Li[\act]}{ ~~\mathlarger{\mathlarger{\sum}}_{\substack{r\in \Qli} }{ \Big[
    \muT(r\given\racti)\cdot \muT(\varphi@\li \Bigiven\,\rli) \Big] } } ~~. 
\end{equation}
Since $\muT(\varphi@\li \Bigiven\,\rli)$ is a constant given that~$r\in \Qli$, an equivalent form of \eqref{eq:ExpSplit} is: 
\begin{equation}\label{eq:Eq4}
 \mathlarger{\mathlarger{\sum}}_{\li\in \Li[\act]}{ \Bigg[   \muT(\varphi@\li\Bigiven\,\rli) \cdot \mathlarger{\mathlarger{\sum}}_{\substack{r\in \Qli} }{ \muT(r\given\racti) } \Bigg] }~~. 
 \end{equation}
The inner summation is performed over the conditional probabilities of the distinct runs whose union is $\Qli$, and thus equals $\muT(\act@\li \given\racti)$. 
Equation~\eqref{eq:Eq4} can thus be rewritten as: 
\begin{equation}\label{eq:Eq5}
\mathlarger{\mathlarger{\sum}}_{\li\in \Li[\act]}{ \Bigg[   \muT(\varphi@\li\Bigiven\,\rli) \cdot \muT(\act@\li \given\racti) \Bigg] }~~.
\end{equation}
Applying the definition of conditional probability  to the right element of each summand, Equation~\eqref{eq:Eq5} becomes: 
\begin{equation}\label{eq:Eq6}
\frac{1}{\muT(\racti)}~\cdot \mathlarger{\mathlarger{\sum}}_{\li\in \Li[\act]}{ \Bigg[   \muT(\varphi@\li\Bigiven\,\rli) \cdot \muT(\act@\li \wedge\racti) \Bigg] }~~.
\end{equation}
%
%$\Pi'$ is a partition of $\Ract$, thus $\muT(\act@\li)>0$ for $\li\in\Li[\act]$ and hence 
%$(T,r) \models (\act@\li \wedge\racti)$ iff $(T,r) \models (\act@\li)$ for each $r\in\RT$, so 
By Lemma~\ref{clm: claim7}(d)
%ZZY .\ref{eq: a@l and a} 
we can rewrite the right element of each summand in \eqref{eq:Eq6} to obtain:
\begin{equation}\label{eq:Eq7}
\frac{1}{\muT(\racti)}~\cdot \mathlarger{\mathlarger{\sum}}_{\li\in \Li[\act]}{ \Bigg[   \muT(\varphi@\li\Bigiven\,\rli) \cdot \muT(\act@\li) \Bigg] }~~.
\end{equation}
Recall that $\muT(\rli)>0$ for $\li\in\Li[\act]$.
We multiply each summand in Equation~\eqref{eq:Eq7} by $\frac{\muT(\rli)}{\muT(\rli)}=1$ and obtain:
\begin{equation}\label{eq:Eq8}
\frac{1}{\muT(\racti)}~\cdot \mathlarger{\mathlarger{\sum}}_{\li\in \Li[\act]}{ \Bigg[   \muT(\varphi@\li\Bigiven\,\rli) \cdot \frac{\muT(\act@\li)}{\muT(\rli)} \cdot \muT(\rli) \Bigg] }~~.
\end{equation}
By Lemma~\ref{clm: claim7}(a)
%ZZY.\ref{eq: a@l and l} 
%the fact that $ (T,r) \models\act@\li$ iff $ (T,r) \models\act@\li\wedge\rli$ for each $r\in\RT$ 
and from the definition of conditional probability we rewrite the second element of each summand to obtain:
\begin{equation}\label{eq:Eq9}
\frac{1}{\muT(\racti)}~\cdot \mathlarger{\mathlarger{\sum}}_{\li\in \Li[\act]}{ \Bigg[   \muT(\varphi@\li\Bigiven\,\rli) \cdot \muT(\act@\li\Bigiven\rli) \cdot \muT(\rli) \Bigg] }~~.
\end{equation}
Since~$\varphi$ is local-state independent of~$\act$ in $T$, we have  by \Cref{def:local-state indep} that $\muT(\varphi@\li\Bigiven\,\rli) \cdot \muT(\act@\li\Bigiven\rli)=\muT\big([\varphi\wedge \act]@\li \Bigiven\rli \big)$.
Hence, we can simplify Equation~\eqref{eq:Eq9} into: 
%$\act$ and~$\varphi$ are local-state independent by Lemma~\ref{lem:past-based independence}, and 
%we obtain:
\begin{equation}\label{eq:Eq10}
\frac{1}{\muT(\racti)}~\cdot \mathlarger{\mathlarger{\sum}}_{\li\in \Li[\act]}{ \Bigg[   \muT\big( ~[\varphi \wedge \act]@\li ~\Bigiven\rli \big) \cdot \muT(\rli) \Bigg] } ~~.
\end{equation}
%
%From the definition of conditional probability and since 
By Lemma~\ref{clm: claim7}(b) and 
%ZZY .\ref{eq: phi_and_a@l and l} and 
%For each $r\in\RT$ we have that $ (T,r) \models \big( ~[\varphi \wedge \act]@\li\big) \wedge\rli$ ~iff~ ~$ (T,r) \models [\varphi \wedge \act]@\li$,~ thus 
 the definition of conditional probability we obtain:
%\begin{equation}%\label{eq:9.6}
%\eqref{eq:9.5}\quad=\quad
\begin{equation}\label{eq:Eq11}
\frac{1}{\muT(\racti)}~\cdot \mathlarger{\mathlarger{\sum}}_{\li\in \Li[\act]}{ \Bigg[   \muT\big( ~[\varphi \wedge \act]@\li ~\big) \Bigg] }~~.
\end{equation}
Recall that $Q_{\varphi}^{\li}$ is the set of runs in which both~$\varphi$ holds and~$\act$ is performed when~$i$'s local state is~$\li$. %$Q_{\varphi}^{\lip}=\{r|\; (T,r) \models ~[\varphi \wedge \act]@\lip ~\}$ for $\lip\in\Li[\act]$. 
By definition, we have that $\muT(Q_{\varphi}^{\li}) = \muT\big( ~[\varphi \wedge \act]@\li ~\big)$, and so we can transform Equation~\eqref{eq:Eq11} into: 
\begin{equation}\label{eq:Eq12}
\frac{1}{\muT(\racti)}~\cdot \mathlarger{\mathlarger{\sum}}_{\li\in \Li[\act]}{ \Big[   \muT(Q_{\varphi}^{\li}) \Big] }~~.
\end{equation}
%
%From the definition of $\Pi'=\{ Q_{\varphi}^{\li}\,|\, \li\in\Li[\act] \}$, defined in the proof of~Lemma~\ref{lem:indep statement}, we obtain:
Define $\Pi'\triangleq\{ Q_{\varphi}^{\li}\,|\, \li\in\Li[\act] \}$. Clearly, $\Pi'$ is a partition of the runs satisfying $\varphi@\act$. 
We can therefore rewrite Equation~\eqref{eq:Eq12} as: %
% , defined in the proof of~Lemma~\ref{lem:indep statement}, we obtain:
\begin{equation}\label{eq:Eq13}
\frac{1}{\muT(\racti)}~\cdot \mathlarger{\mathlarger{\sum}}_{Q_{\varphi}^{\li}\in \Pi'}{ \Big[   \muT(Q_{\varphi}^{\li}) \Big] }~~.
\end{equation}
Since~$\Pi'$ is a partition of the runs satisfying $\varphi@\alpha$, we can further rewrite Equation~\eqref{eq:Eq13} into: 
% over the set of runs satisfying $\varphi@\act$. We thus obtain:
% \NitzCmnt{For the final step, we define a partition on the set of runs in which~$\varphi$ occurs when~$\act$ is performed. 
% Denote $Q_{\varphi}^{\li}=\{r|\; (T,r) \models ~[\varphi \wedge \act]@\li ~\}$ for $\li\in\Li[\act]$. 
% Thus, $Q_{\varphi}^{\li}$ is the set of runs in which both~$\varphi$ holds and~$\act$ is performed when~$i$'s local state is~$\li$. 
% Moreover, define $\Pi'=\{ Q_{\varphi}^{\li}\,|\, \li\in\Li[\act] \}$. Clearly\footnote{The proof can be done in a similar fashion to that of the partition~$\Pi$ mentioned in the proof of \Cref{thm:belief implies prob}.}, $\Pi'$ partitions the set of runs satisfying $\varphi@\act$, and so: }
%For the final step, we use the partition~$\Pi'$ defined in the proof of~Lemma~\ref{lem:indep statement}, which partitions the set of runs satisfying $\varphi@\act$, to obtain:
\begin{equation}\label{eq:Eq14}
%\eqref{eq:9.6}\quad=\quad
\frac{1}{\muT(\racti)}\cdot~ \muT\big(\varphi@\act \big) ~~.
\end{equation}
From Lemma~\ref{clm: claim7}(e)
%ZZY .\ref{eq: phi@a and a} 
and the definition of conditional probability, the expression in \eqref{eq:Eq14} equals 
%\begin{equation}\label{eq:Eq15}
$\muT\big(\varphi@\act \Bigiven \racti \big)$.
%\end{equation}
It follows that 
$$\muT(\varphi@\act\Bigiven\,\racti)~=~\expectation_{\muT}(\pBel_i(\varphi)@\act\Bigiven\,\racti)~~,$$ 
as claimed. 
\end{proof}

\section{Proving Lemma~5.1}%\ref{lem:sometimes}}
\label{Appendix:SecE}
%\subsection{Proving Lemma~\ref{lem:sometimes}}\label{Appendix:SecE}
We recall Lemma~\ref{lem:sometimes}: %\\[1ex]
\vspace{4mm}
\newline
\textbf{Lemma~5.1}~~ 
{\it Let $\act$ be a proper action for agent~$i$ in a pps~$T$, and let a fact~$\varphi$ be local-state independent of~$\act$ in~$T$.
If \mbox{$\muT(\varphi@\act\Bigiven\,\racti)\ge p$}, then there must be at least one point $(r,t)$ of~$~T$ at which $\act$ is performed and $(T,r,t)\sat \,\,\pBel_i(\varphi)\ge p$.}
%\recallSometimes*
%
\begin{proof}
We prove the counterpositive. 
Recall \Cref{eq:Ract}:
$$\expectation_{\muT}(\pBel_i(\varphi)@\act\Bigiven\,\racti)=
\mathlarger{\mathlarger{\sum}}_{r\in \Ract}{\Big[\muT(r\given\racti) \cdot \big(\pBel_i(\varphi)@\act\big)[r] \Big]}$$
Begin with the right-hand side of~\eqref{eq:Ract} and assume that $\pBel_i(\varphi)<p$ whenever~$i$ performs~$\act$, thus:
$$ \mathlarger{\mathlarger{\sum}}_{r\in \Ract}{\Big[\muT(r\given\racti) \cdot \big(\pBel_i(\varphi)@\act\big)[r] \Big]} \quad < \quad
\mathlarger{\mathlarger{\sum}}_{r\in \Ract}{\Big[\muT(r\given\racti) \cdot p \Big]}\quad = \quad %$$
%$$ 
p~\cdot \mathlarger{\mathlarger{\sum}}_{r\in \Ract}{ \muT(r\given\racti) } ~~=~~ p\quad.$$
Hence, $\expectation_{\muT}(\pBel_i(\varphi)@\act\Bigiven\,\racti) < p$.
Recall from \Cref{thm:mixedExpectationOfBeliefTrm} that $$\muT(\varphi@\act\Bigiven\,\racti)=\expectation_{\muT}(\pBel_i(\varphi)@\act\Bigiven\,\racti) ~~,$$ and so if $\pBel_i(\varphi)<p$ whenever~$i$ performs~$\act$, then~$\muT(\varphi@\act\Bigiven\,\racti)<p$ .
The claim follows.
\end{proof}

\section{Proving Theorem~7.1 and Corollary~7.2} %\Cref{thm:PSB}}
\label{Appendix:SecF}

We now turn to proving 
%\Cref{thm:PSB}
Theorem~7.1 and Corollary~7.2, which show, roughly, that if~$\varphi$ is guaranteed to hold whp when~$i$ performs~$\act$, then the agent must probably approximately know that~$\varphi$ holds when it performs~$\act$.
%  that if the condition~$\varphi$ and the proper action~$\act$ satisfy our independence condition, then the probability of~$\varphi$ when~$\act$ is performed equals the {\em expected} degree of belief the agent has in~$\varphi$ when she performs~$\act$. 
We start by proving Lemma~\ref{lem:probability1}, which essentially establishes the claim when the threshold probability is~1.
% \section{\nameref{sec:sec5label}- Proofs}
% %PSB for Mixed Strategies}
% We begin by proving Lemma~\ref{lem:probability1}, and then proceed to prove \Cref{thm:PSB}.
%
%When $\vareps=0$, \Cref{thm:PSB} states that in the setting of \Cref{thm:mixedExpectationOfBeliefTrm}, 
In this case, under our model assumptions, if~$\varphi$ must hold when~$i$ performs an action~$\act$, then whenever~$i$ acts it must {\em know} that~$\varphi$ currently holds.
    %reduces to \KoP\ (stated in probabilistic terms) 
This claim is stated formally in Lemma~\ref{lem:probability1}, which closely corresponds to the claim made by \KoP:
%\subsection{Proving Lemma~\ref{lem:probability1}}\label{AppendixProofProbOne}
%Recall Lemma~\ref{lem:probability1}:
%\recallProbabilityOne*
\begin{restatable}{lemma}{recallProbabilityOne}\label{lem:probability1}
    Let $\act$ be a proper action for agent~$i$ in a pps~$T$, and let a fact~$\varphi$ be local-state independent of~$\act$ in~$T$.
    If ~$\muT(\varphi@\act\Bigiven\,\racti)=1$~ then 
    ~$%\[
    \muT\Big(\pBel_i(\varphi)@\act~=~ 1\, \Bigiven\, \racti \Big)~=~ 1.$
    %\]
\end{restatable}

\begin{proof}
We prove the counterpositive. 
Recall that $\Qli=\{r|\, (T,r) \!\models\act@\li\}$ for every $\li\in\Li[\act]$, and that $\Pi=\{ \Qli\,| ~\li\in\Li[\act] \}$ is a partition of~$\Ract$.
From \eqref{eq:ExpSplit} %from the proof of \Cref{thm:mixedExpectationOfBeliefTrm} 
we obtain:
$$ \muT(\varphi@\act \given \racti) ~=~ 
\mathlarger{\mathlarger{\sum}}_{\li\in \Li[\act]}{ ~~\mathlarger{\mathlarger{\sum}}_{\substack{r\in \Qli} }{ \Big[
\muT(r\given\racti)\cdot \muT(\varphi@\li \Bigiven\,\rli) \Big] } } \quad. $$

Assume that there exists a run~$r\in\Ract$ in which~$i$'s belief in~$\varphi$ when it performs~$\act$ is smaller than~1, i.e.,  \mbox{$\big(\pBel_i(\varphi)@\act\big)[r] < 1$}.
Then there exists a point~$(r,t)\in\Pts(T)$ at which~$i$ performs~$\act$, but ${(T,r,t)\models~ \pBel_i(\varphi)<1}$. For the state $\li=\rit$ we obtain by definition of~$\pBel_i(\varphi)$ that $\muT(\varphi@\li \given\rli)<1$. Thus:

\begin{equation}\label{eq: ineq prob1 .1}
\begin{split}
    \mathlarger{\mathlarger{\sum}}_{\li\in \Li[\act]}{ ~~\mathlarger{\mathlarger{\sum}}_{\substack{r\in \Qli} }{ \Big[
\muT(r\given\racti)\cdot \muT(\varphi@\li \Bigiven\,\rli) \Big] } } \quad < \quad 
\mathlarger{\mathlarger{\sum}}_{\li\in \Li[\act]}{ ~~\mathlarger{\mathlarger{\sum}}_{\substack{r\in \Qli} }{ \Big[
\muT(r\given\racti)\cdot 1 \Big] } } \quad.
\end{split}
\end{equation}
Since~$\Pi$ is a partition of~$\Ract$, the right-hand side of~\eqref{eq: ineq prob1 .1} equals:
$$ \mathlarger{\mathlarger{\sum}}_{\substack{r\in \Ract} }{ \Big[
\muT(r\given\racti) \Big] } \quad = \quad 1 \quad.$$
%
% Hence there exists a point~$(r',t')\in\Pts(T)$ such that $r'_i(t')=\li$ but~$(T,r',t')\nvDash\varphi$. Recall that~$\act$ is a \textcolor{red}{\textbf{deterministic}} action, thus~$i$ performs~$\act$ at~$(r',t')$ since both $r'_i(t')=\rit$ and~$i$ performs~$\act$ at~$(r,t)$.
% It follows that $r'\nvDash \varphi@\act$.
% Since every run of a pps has a positive probability measure by~$\muT$ we conclude that~$\muT(\varphi@\act \wedge \racti) < \muT(\racti)$, thus $\muT(\varphi@\act \Bigiven\, \racti) < 1$.
It follows that if there exists a run~$r\in\Ract$ such that~$i$'s belief in~$\varphi$ when it performs~$\act$ is smaller than~1, then~ \mbox{$\muT(\varphi@\act \Bigiven\, \racti) < 1$}. This establishes the counterpositive claim, and completes the proof.
%It follows that if~~ $\muT(\varphi@\act \Bigiven\, \racti)=1$, then~~ $\muT(\pBel_i(\varphi)@\act = 1 \Bigiven\, \racti)=1$ as claimed.
\end{proof}

% \subsection{Proving \Cref{thm:PSB}}\label{AppendixProofPSB}
% We recall \Cref{thm:PSB}:
%
 
 \recallPSK*
 
 \begin{proof}
Recall that~$\Ract$ denotes the set of runs in which~$i$ performs~$\act$. 
First note that if $\big(\pBel_i(\varphi)@\act\given\racti\big)[r]\ge 1-\vareps$ for all $r\in\Ract$, then 
$\muT\Big(\pBel_i(\varphi)@\act\,\ge\, 1-\vareps~ \Bigiven\, \racti \Big)~=~1~\ge~ 1-\delta$, and we are done. 
From here on, the proof will be performed under the assumption that the set of runs $r\in\Ract$ for which 
$\big(\pBel_i(\varphi)@\act\given\racti\big)[r]< 1-\vareps$ ~is not empty (and thus has positive probability).

\Cref{eq:Ract} states the following:
\[    \expectation_{\muT}\big(\pBel_i(\varphi)@\act\given\racti\big)~~=~~~
\mathlarger{\mathlarger{\sum}}_{r\in \Ract}{\Big[\muT(r\given\,{\racti})\cdot \big(\pBel_i(\varphi)@\act\big)[r]\Big]} \quad .\]
We can partition the runs of~$\Ract$ into ones in which $\pBel_i(\varphi)<1-\vareps$ when~$i$ performs~$\act$, and ones in which $\pBel_i(\varphi)\ge 1-\vareps$ there. 
Since~$\varphi$ and $\act$ are fixed throughout the proof, we will use the following shorthands for ease of exposition. 
We denote by $R_{<1-\vareps}$ the set of runs of $\Ract$ for which $(\pBel_i(\varphi)@\act) < 1-\vareps$; similarly, we use $R_{\ge 1-\vareps}$ to denote the set of runs of $\Ract$ for which $(\pBel_i(\varphi)@\act) \ge 1-\vareps$.
Thus $\eqref{eq:Ract}$ can be reformulated as:
\begin{equation}\label{eq:bel split}
%{ \setstretch{2.25}
%\begin{array}{l}
\begin{aligned}
\expectation_{\muT}\big(\pBel_i(\varphi)@\act\given\racti & \big)  ~~=~~ \\
    &\mathlarger{\mathlarger{\sum}}_{r\in R_{< 1-\vareps} }{   \Big[\muT(r\given\,{\racti})\cdot  \big(\pBel_i(\varphi)@\act\big)[r]\Big]} ~+ 
     ~ \mathlarger{\mathlarger{\sum}}_{r\in R_{\ge 1-\vareps} }{\Big[\muT(r\given\,{\racti})\cdot \big(\pBel_i(\varphi)@\act\big)[r]\Big]}
     \quad.
\end{aligned}
%\end{array}
%}
%\end{align*}
\end{equation}
\newline
%We denote by $\boldsymbol{\Pro{< }}$ the expression $\muT\big(\pBel_i(\varphi)@\act < (1-\delta)\, \given \racti\big)$; similarly, we use %$\boldsymbol{\Pro{> }}$ and 
%$\boldsymbol{\Pro{\ge }}$ to denote 
%$\muT\big(\pBel_i(\varphi)@\act > (1-\delta)\, \given \racti\big)$ and 
%$\muT\big(\pBel_i(\varphi)@\act \ge (1-\delta)\Bigiven\, \racti\big)$. %, respectively. 
We remark that in runs in which $\pBel_i(\varphi)@\act \ge 1-\vareps$, agent $i$'s belief is upper bounded by~1. In addition, recall that, by assumption, 
$R_{<1-\vareps}$ is not an empty set. Hence,  the right-hand side of \eqref{eq:bel split} satisfies:
\begin{equation}\label{eq:A15}
%{ %\setstretch{2.75}
%\begin{array}{l}
    %\eqref{eq:bel split}~~ <\\
    \begin{aligned}
    \mathlarger{\mathlarger{\sum}}_{r\in R_{< 1-\vareps} }{\Big[\muT(r\given\,{\racti})\cdot \big(\pBel_i(\varphi)@\act\big)[r]\Big]} ~~&+~ 
  %\quad\quad\quad\quad 
  \mathlarger{\mathlarger{\sum}}_{r\in R_{\ge 1-\vareps} }{\Big[\muT(r\given\,{\racti})\cdot \big(\pBel_i(\varphi)@\act\big)[r]\Big]}~~< \\
    (1-\vareps)\cdot \muT\big(\pBel_i(\varphi)@\act < 1-\vareps\, \given \racti\big)  \; &+\; ~
  % \quad\quad\quad\quad 
  1\cdot \muT\big(\pBel_i(\varphi)@\act \ge 1-\vareps\, \given \racti\big) \quad.
    \end{aligned}
%\end{array}
%}
\end{equation}

We will prove the counterpositive.
Assume that %\newline
$\muT\big(\pBel_i(\varphi)@\act \ge 1-\vareps\, \given \racti\big) <1-\delta $. Thus, for some $\delta'>\delta$, 
\mbox{$\muT\big(\pBel_i(\varphi)@\act \ge 1-\vareps\, \given \racti\big) =1-\delta' $} , and the probability %\newline
$\muT\big(\pBel_i(\varphi)@\act < 1-\vareps\, \given \racti\big)$ of the complementary event is $\delta'$. It follows that:
$$\eqref{eq:A15} \quad=\quad (1-\vareps)\cdot \delta' \;+\; 1\cdot(1-\delta') \quad = \quad
%x'-\delta x'+1-\vareps'=
1-\vareps \delta'\quad.$$

By assumption, 
$\delta' >\delta $ and so  $~\expectation_{\muT}(\pBel_i(\varphi)@\act\Bigiven\,\racti) ~<~ 1-\vareps\delta' ~<~ 1-\vareps\delta$.
%Recall that %the inequality at \eqref{eq:A15} is strict, and that 
By \Cref{thm:mixedExpectationOfBeliefTrm} we obtain that
$\muT\big(\varphi@\act\Bigiven\,\racti\big) \,=\, \expectation_{\muT}(\pBel_i(\varphi)@\act\Bigiven\,\racti) $. % Using the strict inequality of \eqref{eq:A15} 
We thus obtain that: 
$$\muT\big(\varphi@\act\Bigiven\,\racti\big) \quad=\quad
\expectation_{\muT}(\pBel_i(\varphi)@\act\Bigiven\,\racti) \quad < \quad  
1-\vareps\delta  \quad.$$
%Since $\delta>0$ by assumption, we obtain that $x' \le x$. Hence,  $\Pro{> }  = 1-\vareps' \ge 1-\vareps$.
%Note that the set of runs satisfying $\pBel_i(\varphi)@\act \ge (1-\delta)$ includes the set of runs satisfying~$\pBel_i(\varphi)@\act > 1-\delta$. Thus we obtain:
%Whenever $\pBel_i(\varphi)@\act > (1-\delta)$ we have, in particular, that $\pBel_i(\varphi)@\act \ge (1-\delta)$. Hence: 
%$$\Pro{\ge } \quad \ge \quad \Pro{> }  \quad\ge\quad 1-\vareps \quad.$$
%
It thus follows that if $\muT\big(\varphi@\act\Bigiven\,\racti\big) \ge 1-\vareps\delta$, then $\muT\big(\pBel_i(\varphi)@\act \ge 1-\vareps\, \given \racti\big) \ge 1-\delta$.
\end{proof}

\vspace{5mm}
We are now ready to prove:\\[1ex] \newline
\textbf{Corollary 7.2.}\quad 
{\it Let $\act$ be a proper action for agent~$i$ in a pps~$T$, and let~$\varphi$ be local-state independent of~$\act$ in~$T$. %\\
For all $\vareps\ge 0$,
\hspace{3mm}if \hspace{2mm} 
$\muT(\varphi@\act\Bigiven\,\racti)\ge1-\vareps^2$ 
\hspace{3mm}then \hspace{3mm}
$\muT\Big(\pBel_i(\varphi)@\act\,\ge\, 1-\vareps~ \Bigiven\, \racti \Big)~\ge~ 1-\vareps.$
}
%\recallPSB*

\begin{proof}
The case of $\vareps=0$ is simply Lemma~\ref{lem:probability1}. For $\vareps=1$ the claim follows from the fact that~$\muT$ is a probability measure.
Finally, for $0<\vareps<1$, the claim is an instance of \Cref{thm:PSK} obtained by setting $\delta=\vareps$. 
\end{proof}

\end{document}